\keywords{Parity Games, Quasipolynomial algorithm, Zielonka's algorithm}
\newcommand{\solve}{{\textsc{Solve}}}
\newcommand{\atr}{\textsc{Attr}}
\newcommand{\Gg}{\mathcal{G}}
\newcommand{\Even}{\mathsf{E}}
\newcommand{\Odd}{\mathsf{O}}
\newcommand{\Oo}{\mathcal{O}}
\newcommand{\player}{\wp}
\begin{document}
\title[A Recursive Quasipolynomial Solution to Parity Games]{A Recursive Approach to Solving Parity Games \texorpdfstring{\\}{} in Quasipolynomial Time\rsuper*}
\titlecomment{{\lsuper*}Journal version of Parys~\cite{Par19} and Lehtinen, Schewe, and Wojtczak~\cite{LSW19}}

\author[K.~Lehtinen]{Karoliina Lehtinen\rsuper{a}}	
\author[P.~Parys]{Pawe{\l} Parys\rsuper{b}}
\author[S.~Schewe]{Sven Schewe\rsuper{c}}	
\author[D.~Wojtczak]{Dominik Wojtczak\rsuper{c}}

\address{CNRS, Aix-Marseille University and University of Toulon, LIS, Marseille}	
\email{lehtinen@lis-lab.fr}  
\urladdr{\url{http://pageperso.lif.univ-mrs.fr/~karoliina.lehtinen/}}

\address{Institute of Informatics, University of Warsaw, Poland}
\email{parys@mimuw.edu.pl}
\urladdr{\url{https://www.mimuw.edu.pl/~parys/}}

\address{University of Liverpool, UK}	
\email{\{sven.schewe,d.wojtczak\}@liverpool.ac.uk}  
\urladdr{\url{https://www2.csc.liv.ac.uk/~sven/}, \url{https://www2.csc.liv.ac.uk/~dominik/}}

\thanks{
\lettrine[image=true, lines=2, findent=1ex, nindent=0ex, loversize=.12]{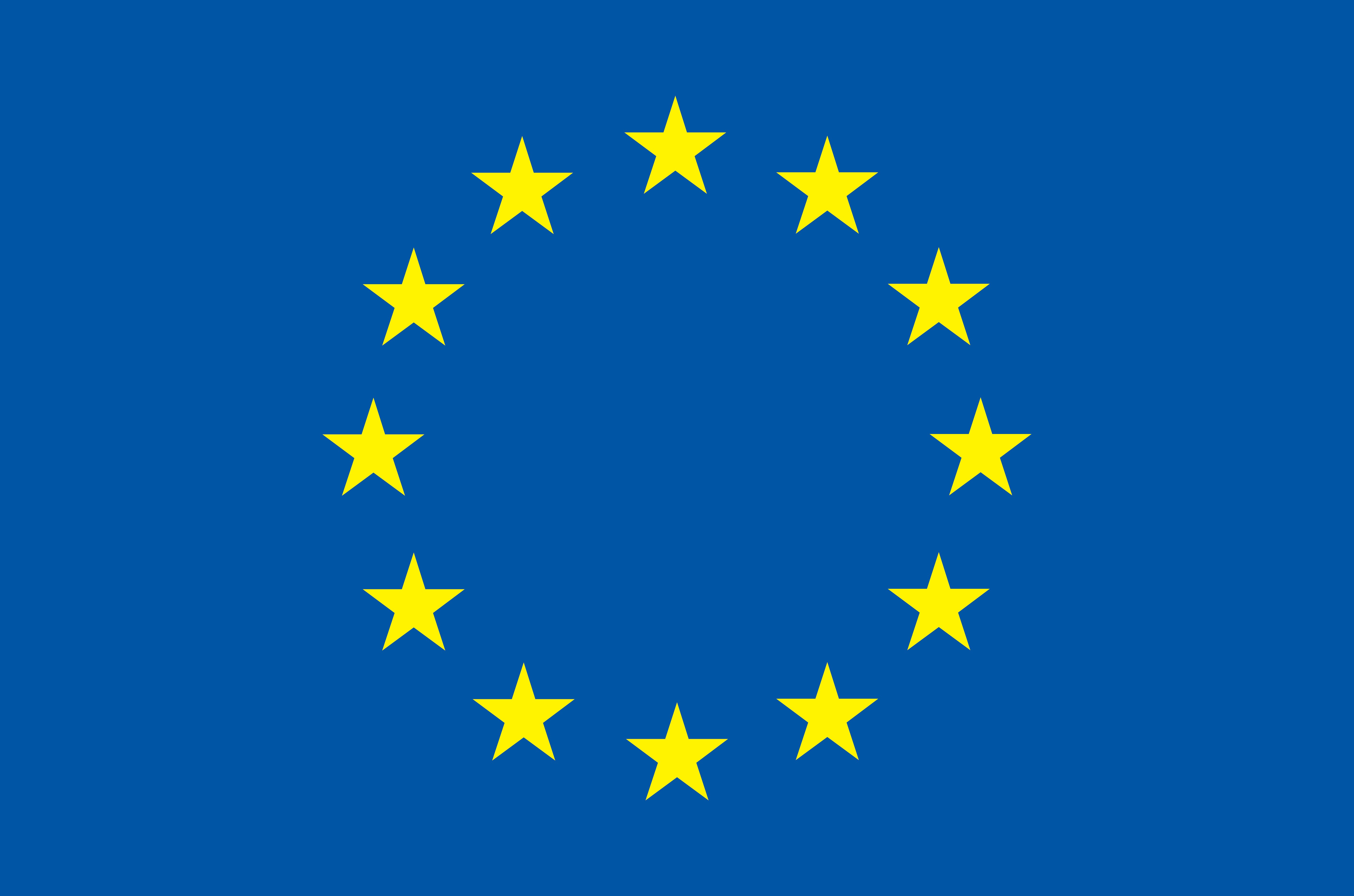}
This project has received funding from the European Union's Horizon 2020 research and innovation programme under the Marie Sk\l odowska-Curie grant
agreement No 892704, and by the by the Engineering and Physical Sciences Research Council grant EP/P020909/1. The second author is supported by the National Science Centre, Poland (grant no.\@ 2016/22/E/ST6/00041).
}	




\begin{abstract}
Zielonka's classic recursive algorithm for solving parity games is perhaps the simplest among the many existing parity game algorithms. However, its complexity is exponential, while currently the state-of-the-art algorithms have quasipolynomial complexity.
Here, we present a modification of Zielonka's classic algorithm that brings its complexity down to $n^{\Oo\left(\log\left(1+\frac{d}{\log n}\right)\right)}$, for parity games of size $n$ with $d$ priorities, in line with previous quasipolynomial-time solutions.
\end{abstract}

\maketitle
\section{Introduction}

A parity game is an infinite two-player game in which Even and her opponent Odd build an infinite path along the edges of a graph labelled with integer priorities.
Even's goal is for the highest priority seen infinitely often on this path to be even, while Odd tries to stop her.

Parity games are a central tool in automata theory, logic, and their applications to verification.
In particular, the model-checking problem for the modal $\mu$ calculus~\cite{EJS01} and the synthesis problem for LTL~\cite{PR89} reduce to solving parity games.
Solutions to parity games have also influenced work on $\omega$-word automata translations~\cite{BL18,DJL19}, linear optimisation~\cite{Fri11,FHZ11} and stochastic planning~\cite{Fea10}.

The complexity of solving parity games, that is, deciding which player has a winning strategy, is a long standing open problem.
The problem is known to be both in {\sc UP} and {\sc coUP}~\cite{Jur98} but a polynomial algorithm remains, so far, out of reach.
In 2017 Calude et al.\@ published the first quasipolynomial-time algorithm~\cite{CJKLS17}, which was followed by several alternative algorithms~\cite{JL17,Leh18},
variations and improvements thereupon~\cite{FJKSSW19,Par20,DJT20}.
The progress-measure approach~\cite{JL17} is based on succinct encodings of classical progress measures,
while the register-game approach~\cite{Leh18} is based on a relatively involved analysis of the structure of winning strategies in parity games.

In this paper we present a quasipolynomial-time parity-game algorithm based on Zielonka's classic recursive algorithm.
Zielonka's algorithm~\cite{Zielonka}, based on McNaughton's algorithm for solving Muller games~\cite{McNaughton}, is perhaps the (conceptually) simplest parity game algorithm to date.
Despite its exponential worst-case complexity, it is also one of the most performant algorithms in practice~\cite{oink}.
Here we show how to adapt this algorithm to make its worst-case complexity quasipolynomial.
Our algorithm, its correctness proof, and complexity analysis are all remarkably simple.
Its runtime complexity is roughly in line with previous quasipolynomial-time algorithms.

Our key insight is that, instead of each recursive call solving a subgame, we use a weakened induction hypothesis,
which postulates that each call should return a partition that separates the \textit{small dominions} of both players, up to a size specified by a pair of parameters.
Generalising the observation that only one dominion can be larger than half the arena,
we then use these parameterised calls to build an algorithm that only makes a quasipolynomial number of calls, but still finds the winning regions of each player.

The time complexity of our algorithm, which is in $n^{\Oo\left(\log\left(1+\frac{d}{\log n}\right)\right)}$ for games with $n$ vertices and $d$ priorities, is similar to the complexity of previous quasipolynomial-time algorithms.
This also provides fixed-parameter tractability when $d$, the number of priorities, is treated as the parameter,
as well as a polynomial bound for the common case where the number of priorities is logarithmic in the number of states.
In a fine grained comparison our algorithm (similarly to the original Zielonka's algorithms, on which it is based) operates symmetrically, going through every priority,
rather than just half of them (as in previous quasipolynomial-time algorithms),
meaning that the $\Oo\left(1+\log\frac{d}{\log n}\right)$ in the exponent hides a factor of $2$.
Thus, a very careful analysis still reveals a small gap, when compared to previous quasipolynomial-time algorithms.

We provide two versions of our algorithm.
One has better worst-case complexity (outlined above), while the other turns out to be somewhat better in practice.
We evaluate both versions against Zielonka's classic algorithm, which, despite its exponential worst-case behaviour, is in most cases faster than our quasipolynomial-time versions.
In line with the theory, both our algorithms outperform Zielonka's algorithm on the families of games designed to exhibit its exponential worst-case behaviour.

We also briefly comment on the relationship between this recursive algorithm and universal trees,
but refer the reader to Jurdzi\'nski and Morvan's work~\cite{JM20} for a more thorough analysis, which also addresses the symbolic implementation of recursive algorithms.

This is a journal version of Parys' paper~\cite{Par19}, in which he turns Zielonka's exponential-time algorithm into a quasipolynomial-time one;
additionally, it includes theoretical improvements suggested by Lehtinen, Schewe, and Wojtczak~\cite{LSW19}, which bring the complexity of the algorithm down to roughly match the state of the art.

\section{Preliminaries}

A \emph{parity game} is a two-player game between players Even and Odd
played on a \emph{game graph} defined as a tuple $\Gg=(V,V_\Even,E,\pi)$,
where $(V,E)$ is a finite directed graph in which every vertex has at least one successor;
its vertices are labelled with positive integer \emph{priorities} by $\pi\colon V\rightarrow\{1,2,\dots,d\}$ (for some $d\in\mathbb{N}$), and
partitioned between vertices $V_\Even$ \emph{belonging to Even} and vertices $V_\Odd=V\setminus V_\Even$ \emph{belonging to Odd}.
As usual for directed graphs, we forbid self-loops (i.e., edges from a vertex to itself).
We sometimes abbreviate Even and Odd to $\Even$ and $\Odd$, we use the symbol $\player$ for a player (either Even or Odd), and we write $\bar\player$ for the opponent of $\player$.

A \emph{play} $\rho$ is an infinite path through the game graph.
It is \emph{winning for Even} if the highest priority occurring infinitely often on it is even; otherwise it is \emph{winning for Odd}.
We write $\rho[i]$ for the $i$\textsuperscript{th} vertex in $\rho$ (zero based) and $\rho[0,j]$ for its prefix of length $j+1$.

A \emph{strategy} from a vertex $v$ for a player $\player$ is a function that
\begin{itemize}
\item takes any prefix of a play starting in $v$ and ending in a vertex that belongs to $\player$, and
\item returns one of successors of the latter vertex.
\end{itemize}
A strategy $\sigma$ from $v$ for Even (Odd) \emph{agrees} with a play $\rho$ if $\rho[0]=v$ and, whenever $\rho[i]\in V_\Even$ ($V_\Odd$, respectively), then $\rho[i+1]=\sigma(\rho[0,i])$
(and $\sigma$ agrees with a finite path, if it agrees with some infinite play extending the path).
A strategy for a player is \emph{winning} if it agrees only with plays winning for that player.
Parity games are determined: from every vertex, one of the two players has a winning strategy~\cite{Martin-determinacy}.

The \emph{winning region} of a player $\player$ is the set of vertices from which $\player$ has a winning strategy.
We are interested in the problem of computing, given a parity game $G$, the winning regions of each player.

Given a set of vertices $G\subseteq V$, we can consider a \emph{subgame} of $\Gg$ \emph{induced} by $G$,
which is obtained simply by restricting all components of $\Gg$ to $G$.
Notice, however, that not every set $G$ induces a valid subgame: every vertex in $G$ should have at least one successor in $G$.
In the sequel, we often write ``the subgame $G$'' instead of the more precise ``the subgame of $\Gg$ induced by $G$''.
For the definitions below, we assume some set $G\subseteq V$ inducing a subgame of $\Gg$.

A \emph{dominion} of a player $\player$ in a subgame $G$ is a set of vertices $D\subseteq G$ such that
from every vertex $v\in D$ the player $\player$ has, in the subgame $G$, a winning strategy that agrees only with plays staying forever in $D$.

Given a set $S\subseteq G$, the \emph{$\player$-attractor} of $S$ in the subgame $G$, written $\atr_\player(S,G)$,
is the set of vertices from which the player $\player$ has a strategy in $G$ that agrees only with plays reaching $S$.
A set $S\subseteq G$ is \emph{$\player$-closed} in the subgame $G$ if $\atr_{\bar \player}(G\setminus S,G)=G\setminus S$.
More concretely, this means that $\player$ can ensure to stay inside $S$: every vertex of $\player$ in $S$ has at least one successor in $S$,
and every vertex of $\bar\player$ in $S$ has no successor in $G\setminus S$.

Observe that the set $G\setminus\atr_\player(S,G)$, for any $S\subseteq G$, always induces a subgame:
if all successors of a vertex belong to $\atr_\player(S,G)$, then this vertex belongs there as well.
Observe also that every dominion of $\player$ is $\player$-closed, and that the winning region of $\player$ is a dominion of $\player$.

We use the following simple lemmata to prove correctness of our algorithm:

\begin{lem}\label{lem:att}
	Let $D$ be a dominion of a player $\player$ in a subgame $G$, and let $X\subseteq G$.
	If $D$ does not intersect with $X$, then $D$ is also a dominion of $\player$ in {\normalfont$G\setminus\atr_{\bar \player}(X,G)$}.
\end{lem}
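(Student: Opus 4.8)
The plan is to exhibit, for $D$ in the smaller subgame $G' := G\setminus\atr_{\bar\player}(X,G)$, the very same winning strategy that already witnesses $D$ as a dominion in $G$. Two things then have to be checked: first, that $D$ genuinely lies inside $G'$, so that the claim even makes sense; and second, that this strategy stays legal and winning once we pass from $G$ to $G'$.

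For the first point I would show that no vertex of $D$ can belong to $\atr_{\bar\player}(X,G)$. Fix $v\in D$ and let $\sigma$ be a winning strategy for $\player$ in $G$ that keeps every agreeing play inside $D$. Since $D$ does not intersect $X$, no play agreeing with $\sigma$ ever visits $X$. If we had $v\in\atr_{\bar\player}(X,G)$, then $\bar\player$ would possess a strategy forcing every play from $v$ to reach $X$; confronting that strategy with $\sigma$ yields a single play from $v$ that simultaneously stays in $D$ (hence avoids $X$) and reaches $X$ — a contradiction. Therefore $D\cap\atr_{\bar\player}(X,G)=\emptyset$, i.e.\ $D\subseteq G'$.

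For the second point I would reuse $\sigma$ unchanged. The crucial structural fact is that a dominion of $\player$ is $\player$-closed, so every $\bar\player$-vertex of $D$ has all of its $G$-successors inside $D$, while at every $\player$-vertex of $D$ the strategy $\sigma$ already selects a successor inside $D$. Because $D\subseteq G'$, all of these successors also lie in $G'$, so $\sigma$ never calls for a move that is absent from the induced subgame $G'$ and is thus a legal strategy there. Consequently the set of plays starting in $D$ and agreeing with $\sigma$ is literally identical in $G'$ and in $G$: each such play remains within $D$ and traverses only edges present in both subgames. Since the winning condition depends solely on the priorities seen along a play, every such play is still winning for $\player$ in $G'$, which is exactly what is needed for $D$ to be a dominion of $\player$ in $G'$.

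The only genuinely delicate point is verifying that restricting to $G'$ neither deletes a move that $\sigma$ depends on nor creates a new escape move for $\bar\player$ out of $D$; both possibilities are excluded by combining the $\player$-closedness of $D$ with the inclusion $D\subseteq G'$ established above, after which the remainder is routine bookkeeping.
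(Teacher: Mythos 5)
Your proposal is correct and follows essentially the same route as the paper: both reuse the strategies witnessing $D$ as a dominion in $G$, and both establish $D\cap\atr_{\bar\player}(X,G)=\emptyset$ by playing the attractor strategy of $\bar\player$ against the dominion strategy of $\player$ to derive a contradiction (a play that stays in $D$ yet reaches $X$). The only difference is presentational: you prove the disjointness first and then verify legality of the restricted strategy via $\player$-closedness, whereas the paper obtains both in one step by noting that plays agreeing with the dominion strategy can never enter the attractor.
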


\begin{proof}
	By the definition of a dominion, from every vertex $v\in D$ player $\player$ has a winning strategy that agrees only with plays staying in $D$.
	Observe that a play agreeing with such a strategy cannot reach a vertex in the $\bar \player$-attractor of $X$,
	because from such a vertex the opponent can force to leave $D$ and enter $X$.
	Thus, the same strategies (after restricting them appropriately) witness that $D$ is also a dominion in the smaller subgame;
	in particular $D$ does not intersect with $\atr_{\bar \player}(X,G)$.
\end{proof}

\begin{lem}\label{lem:domstrat}
	Let $D$ be a dominion of a player $\player$ in a subgame $G$, and let $X\subseteq G$ be $\bar\player$-closed in $G$.
	Then $D\cap X$ is a dominion of $\player$ in $X$.
\end{lem}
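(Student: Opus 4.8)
The plan is to reuse the very strategies that witness $D$ as a dominion in $G$, restricted to the subgame $X$. For each vertex $w\in D\cap X$ the dominion hypothesis supplies a winning strategy $\sigma_w$ for $\player$ in $G$ all of whose consistent plays stay forever in $D$. I would propose the restriction of $\sigma_w$ to prefixes lying inside $X$ as the strategy witnessing that $w$ is won within $X$, and then verify three things: that $X$ is a legitimate subgame at all, that this restricted strategy never proposes a move leaving $X$, and that every consistent play remains in $D\cap X$ and is winning for $\player$.

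First I would observe that $X$ does induce a subgame. Since $X$ is $\bar\player$-closed in $G$, we have $\atr_\player(G\setminus X,G)=G\setminus X$, hence $X=G\setminus\atr_\player(G\setminus X,G)$, which by the observation in the preliminaries always induces a subgame. The heart of the argument is then an induction along any play $\rho$ in the subgame $X$ that starts in $w$ and agrees with $\sigma_w$, showing that $\rho$ stays in $D\cap X$. At a $\player$-vertex $u\in D\cap X$, the successor chosen by $\sigma_w$ lies in $D$ because $\sigma_w$ keeps all its plays in $D$, and it lies in $X$ because $\bar\player$-closedness forces every $\player$-vertex of $X$ to have all of its successors in $X$; so the move is legal in the subgame $X$ and stays in $D\cap X$. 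At a $\bar\player$-vertex $u\in D\cap X$, any successor that $\bar\player$ may pick within $X$ is automatically in $D$, again because $\sigma_w$ admits only plays staying in $D$, and it is in $X$ by the subgame restriction; so it too lies in $D\cap X$.

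Finally, every play of $X$ consistent with the restricted $\sigma_w$ is also a play of $G$ consistent with $\sigma_w$, simply because in $X$ the opponent $\bar\player$ has only a subset of the moves available in $G$. As $\sigma_w$ is winning in $G$, all such plays are winning for $\player$; combined with the invariant that they stay in $D\cap X$, this shows that $\sigma_w$ restricted to $X$ is a winning strategy from $w$ confined to $D\cap X$. Since $w\in D\cap X$ was arbitrary, $D\cap X$ is a dominion of $\player$ in $X$.

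The main obstacle, and the reason the $\bar\player$-closedness of $X$ is needed, is to rule out $\player$'s own winning strategy accidentally steering a play out of $X$: the dominion property alone guarantees only that plays remain in $D$, not in $X$. It is precisely closedness, which forces all successors of $\player$-vertices in $X$ to stay in $X$, that makes the restriction to $X$ a legal and still winning strategy. Maintaining the two invariants simultaneously along the induction — staying in $D$ by the dominion property, staying in $X$ by closedness together with the subgame restriction — is the only step requiring care.
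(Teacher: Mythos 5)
Your proof is correct and follows essentially the same route as the paper's: restrict the strategies witnessing $D$ as a dominion in $G$ to the subgame $X$, using the $\bar\player$-closedness of $X$ (all successors of $\player$-vertices in $X$ stay in $X$) to ensure the restriction is legal and its plays remain in $D\cap X$. The paper states this in two sentences; your version merely spells out the play-by-play invariant and the (correct) observation that $X$ induces a valid subgame, which the paper leaves implicit.
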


\begin{proof}
	Because $X$ is $\bar\player$-closed in $G$, no vertex of $\player$ in $D\cap X$ has a successor in $G\setminus X$.
	In consequence, strategies that witness $D$ being a dominion of $\player$ in $G$, after restricting them appropriately,
	witness $D\cap X$ being a dominion of $\player$ in $X$.
\end{proof}

\begin{lem}\label{lem:no-h}
	If all priorities in a non-empty dominion $D$ of a player $\player$ are at most $d'$, and $d'$ is not of $\player$'s parity,
	then $D$ contains a non-empty sub-dominion $C$ without vertices of priority $d'$.
\end{lem}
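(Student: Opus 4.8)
The plan is to peel off the priority-$d'$ vertices together with the opponent's attractor to them, and to let $C$ be the region $\player$ still wins in what remains. Write $\bar\player$ for $\player$'s opponent, and let $P=\{v\in D:\pi(v)=d'\}$ be the set of priority-$d'$ vertices of $D$; if $P=\emptyset$ we are done with $C=D$, so assume $P\neq\emptyset$. Put $A=\atr_{\bar\player}(P,D)$ and consider the subgame $D\setminus A$, which is well defined by the observation that removing an attractor always leaves a subgame, and which contains no vertex of priority $d'$ since $P\subseteq A$. I would take $C$ to be the winning region of $\player$ in the subgame $D\setminus A$. Because a winning region is itself a dominion, such a $C$ is automatically a dominion of $\player$ contained in $D$ and free of priority $d'$, so the whole content of the lemma reduces to proving $C\neq\emptyset$.

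The driving observation is that, since every priority in $D$ is at most $d'$ and $d'$ is not $\player$'s parity, no play winning for $\player$ can visit $P$ infinitely often: infinitely many occurrences of $d'$ would make $d'$ the largest priority seen infinitely often, which is a win for $\bar\player$. Hence any play consistent with a strategy witnessing that $D$ is a dominion of $\player$ meets $P$ only finitely often. I would use this to rule out $C=\emptyset$ by contradiction.

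Suppose $C=\emptyset$, i.e.\ $\bar\player$ wins from every vertex of $D\setminus A$ (this vacuously covers the case $A=D$). I would build a strategy $\tau$ for $\bar\player$ on all of $D$ by playing, on $A$, the attractor strategy that forces the play to reach $P$, and, on $D\setminus A$, a strategy witnessing that $\bar\player$ wins there, restarted afresh each time the play re-enters $D\setminus A$ (so that no appeal to positional strategies is needed). The key structural facts are that every $\player$-vertex of $A$ has all of its successors inside $A$, so $\player$ cannot leave $A$ before a priority-$d'$ vertex is reached, and that $\bar\player$, following the attractor strategy, does not leave $A$ earlier either; thus every sojourn of the play in $A$ contains a visit to $P$. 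Analysing a play from an arbitrary vertex consistent with $\tau$, there are two cases: either it enters $A$ infinitely often, in which case it meets $P$—hence priority $d'$—infinitely often and $\bar\player$ wins; or it eventually stays in $D\setminus A$, in which case its suffix is consistent with a fresh $\bar\player$-winning strategy of that subgame from a winning vertex, so $\bar\player$ wins by prefix-independence of the parity condition. Either way $\bar\player$ wins from a vertex of $D$, contradicting that $D$ is a dominion of $\player$; therefore $C\neq\emptyset$.

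I expect the main obstacle to be exactly this non-emptiness step: the care needed to glue the attractor strategy and the winning strategy on $D\setminus A$ into a single well-defined strategy, and to argue that no consistent play can escape $A$ before registering an occurrence of $d'$ nor drift into $\player$'s region of $D\setminus A$. The remaining points—that $D\setminus A$ induces a genuine subgame, that its $\player$-winning region is a dominion, and that $C$ omits priority $d'$—are immediate from the preliminaries.
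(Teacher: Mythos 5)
Your proof is correct, but it takes a genuinely different route from the paper's. The paper argues directly from a strategy: it fixes a winning strategy $\sigma$ witnessing that $D$ is a dominion, lets $C$ be the set of vertices $v$ reachable under $\sigma$ via some history $\rho_v$ after which no $\sigma$-consistent continuation can ever reach priority $d'$, shows $C\neq\emptyset$ (otherwise one builds a $\sigma$-consistent play visiting $d'$ infinitely often, contradicting that $\sigma$ is winning), and takes the residual strategies $\sigma_v(\rho)=\sigma(\rho_v\cdot\rho)$ as witnesses that $C$ is a dominion. This uses no attractors, no winning regions, and in particular no determinacy. Your argument is instead the classical McNaughton--Zielonka-style decomposition: remove $A=\atr_{\bar\player}(P,D)$, take $C$ to be $\player$'s winning region in the subgame $D\setminus A$, and refute $C=\emptyset$ by gluing $\bar\player$'s attractor strategy on $A$ with his winning strategies on $D\setminus A$, restarted at each re-entry. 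Note that passing from ``$C=\emptyset$'' to ``$\bar\player$ wins from every vertex of $D\setminus A$'' invokes determinacy of parity games; this is legitimate here, since the preliminaries cite Martin's theorem, but it makes your argument rest on a heavier external result than the paper's self-contained construction. In exchange, your route uses only standard, reusable machinery (attractors, traps, prefix-independence) and yields a canonical choice of $C$, namely the whole winning region of $\player$ in $D\setminus A$, which is close in spirit to the maximal sub-dominion exploited in Optimisation 5 of Section~\ref{sec:optimisations}. One small imprecision: your claim that ``every $\player$-vertex of $A$ has all of its successors inside $A$'' is not literally true for vertices of $P$ itself, which belong to $A$ regardless of where their successors lead; the statement you actually need, and which does hold by the attractor construction, is that every $\player$-vertex of $A\setminus P$ has all its successors (within $D$) in $A$, so that a play in which $\bar\player$ follows the attractor strategy cannot exit $A$ before visiting $P$.
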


\begin{proof}
	Fix a strategy $\sigma$ for $\player$ from some vertex $v_0\in D$ that is winning and agrees only with plays staying in $D$.
	To $C$ we take all vertices $v$ of $D$ such that
	\begin{itemize}
	\item	there exists a finite path $\rho_v$ that ends in $v$ and agrees with $\sigma$, and
	\item	there is no finite path $\rho'$ that extends $\rho_v$, agrees with $\sigma$, and ends in a vertex of priority $d'$ (in particular, $v$ itself is not of priority $d'$).
	\end{itemize}
	Such a set $C$ is nonempty; otherwise, it would be easy to construct a play that agrees with $\sigma$ and visits priority $d'$ infinitely often
	(start from $v_0$, continue to a vertex of priority $d'$, which is possible because $v_0\not\in C$, continue to an arbitrary successor $v_1$,
	continue to a vertex of priority $d'$, which is possible because $v_1\not\in C$, and so on).
	Then, for a vertex $v\in C$ we consider a strategy $\sigma_v$ such that $\sigma_v(\rho)=\sigma(\rho_v\cdot\rho)$ for all plays $\rho$ starting in $v$;
	this strategy is winning for $\player$ (if $\sigma_v$ agrees with a play $\rho$, then $\sigma$ agrees with $\rho_v\cdot\rho$, and hence both these plays are winning)
	and agrees only with plays staying in $C$ (by the definition of $C$).
\end{proof}

\section{A first quasipolynomial-time version}\label{sec:liverpool}

In this section we present our first quasipolynomial-time algorithm solving parity games.
Informally, we refer to it as the Liverpool variant,
since it is (close to) an algorithm introduced by Lehtinen, Schewe, and Wojtczak~\cite{LSW19}, working at University of Liverpool.

We remark that, historically, the Liverpool variant came after the Warsaw variant, presented in the next section.
We present the Liverpool variant first, as the ``main variant'', because it has better theoretical complexity and because we think that it is more elegant.

\subsection{The algorithm}

First, recall Zielonka's classic recursive algorithm.
Each iteration first identifies the set $N_d$ of vertices of the highest priority, which we can assume to  be even as the odd case is symmetric, and the $\Even$-attractor of $N_d$.
It then makes a recursive call to solve the rest of the game, which has one fewer priority.
Odd's winning region in this subgame, as well as its $\Odd$-attractor, is winning for Odd in the whole game, and is considered solved.
The algorithm then iterates on the remaining arena, which is smaller than the original arena, until no more winning region for Odd is found, and the rest of the arena can be declared winning for Even.
The depth of the recursion is bounded by the number of priorities, while the number of iterations of the while loop is bounded by the size of the arena.
This yields an exponential complexity for the algorithm of roughly $O(n^d)$.

To make this algorithm quasipolynomial-time, we weaken and parameterise the inductive guarantees of recursive calls.
Instead of expecting a recursive call to solve a subgame, we ask that it provides a partition of the subgame into two regions,
one which contains all of Odd's dominions of size up to a given parameter, and one which contains all of Even's dominions of size up to another parameter.
These two parameters determine the precision of the recursive call; if they are both the size of the whole subgame, this corresponds to solving it entirely.

Using three recursive calls, our algorithm consecutively computes three regions, $G\setminus G_1$, $G_1\setminus G_2$, and $G_2\setminus G_3$, which, together,
contain all Odd's dominions of size up to the first parameter and no Even's dominion of size up to the second parameter.
The first and the third region are based on calls that use half the precision parameter for Odd's dominions,
and only the second region is computed using the full precision.
Correctness of the algorithm hinges on proving that $G\setminus G_1$ and $G_1\setminus G_2$ together cover already over the half of any small Odd's dominion,
and hence the last call handles correctly what is left.

We present $\solve_\Even(G,d,p_\Even,p_\Odd)$, which, given a subgame $G$ with priorities up to an even number $d$,
returns a set of vertices containing all of Even's small dominions (of size up to $p_\Even$) and not intersecting any of Odd's small dominions (of size up to $p_\Odd$).
The dual, $\solve_\Odd$, is defined for odd $d$ by swapping $\Even$ with $\Odd$.

\begin{algorithm}
\caption{$\solve_\Even(G,d,p_\Even,p_\Odd)$}\label{alg:simple}
\begin{algorithmic}[1]
	\STATE \algorithmicif\ $G = \emptyset$ or $p_\Odd\leq 1$\ \algorithmicthen
	\STATE \hskip\algorithmicindent\algorithmicreturn\ $G$\label{l2}
	\STATE $G_1=\solve_\Even(G,d, p_\Even, \lfloor p_\Odd/2\rfloor)$\label{l4}
	\STATE $N_d := \{v\in G_1 \mid \pi(v)=d\}$
	\STATE $H := G_1 \setminus \atr_\Even(N_d,G_1)$\label{l6}
	\STATE $W_\Odd := \solve_\Odd(H,d-1,p_\Odd,p_\Even)$\label{l7}
	\STATE $G_2 := G_1 \setminus \atr_\Odd(W_\Odd,G_1)$\label{l8}
	\STATE $G_3:=\solve_\Even(G_2,d,p_\Even,\lfloor p_\Odd/2\rfloor)$\label{l9}
	\STATE \algorithmicreturn\ $G_3$
\end{algorithmic}
\end{algorithm}

\begin{rem}
	In Lehtinen, Schewe, and Wojtczak~\cite{LSW19}, Line~\ref{l4} of the above algorithm was replaced by
	$G_1=G\setminus\atr_\Odd(G\setminus\solve_\Even(G,d, p_\Even, \lfloor p_\Odd/2\rfloor), G)$.
	We can see, however, that the call to $\atr_\Odd$ in this expression adds no new vertices, meaning that such a line computes exactly the same set $G_1$ as our Line~\ref{l4}.
	Indeed, below, in Lemma~\ref{lem:cor-liv}, we show that the set returned by $\solve_\Even(G,d, p_\Even, \lfloor p_\Odd/2\rfloor)$ is $\Even$-closed.
	Thus, on the one hand, the presence of $\atr_\Odd$ is redundant here; on the other hand, after adding this $\atr_\Odd$ one can simplify a bit the correctness proof:
	Item~\ref{c3:e1} of Lemma~\ref{lem:cor-liv} ceases to be needed.
\end{rem}

\subsection{Correctness}

We prove the following lemma, which guarantees that $\solve_\Even(G,d,p_\Even,\allowbreak p_\Odd)$ and $\solve_\Odd(G,d,p_\Odd,p_\Even)$
partition a subgame $G$ of maximal priority at most $d$ into a region that contains all Odd's dominions of size up to $p_\Odd$ and a region that contains all Even's dominions of size up to $p_\Even$.
Then $\solve_\Even(G,d,|G|,|G|)$ (if $d$ is even) or $\solve_\Odd(G,d,|G|,|G|)$ (if $d$ is odd) solves $G$ completely.

\begin{lem}\label{lem:cor-liv}
	The procedure {\normalfont$\solve_\Even(G,d,p_\Even,p_\Odd)$}, where $d$ is even and not smaller than the maximal priority in $G$, returns a set that
	\begin{enumerate}[label={(\roman*)}]
	\item\label{c3:e1}
		is $\Even$-closed in $G$,
	\item\label{c3:e2}
		contains all Even's dominions in $G$ of size up to $p_\Even$, and
	\item\label{c3:e3}
		does not intersect with any Odd's dominion in $G$ of size up to $p_\Odd$.
	\end{enumerate}
	Similarly,  {\normalfont$\solve_\Odd(G,d,p_\Odd,p_\Even)$}, where $d$ is odd and not smaller than the maximal priority in $G$, returns a set that
	\begin{enumerate}[label={(\roman*)}]
	\item	is $\Odd$-closed in $G$,
	\item	contains all Odd's dominions in $G$ of size up to $p_\Odd$, and
	\item	does not intersect with any Even's dominion in $G$ of size up to $p_\Even$.
	\end{enumerate}
\end{lem}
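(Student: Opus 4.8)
The plan is to prove the $\Even$- and $\Odd$-statements simultaneously by induction, exploiting that the $\Odd$-statement is obtained from the $\Even$-one by swapping the two players throughout, so that I would only write out the $\Even$-case. I would well-order the instances lexicographically by the pair consisting of the priority bound and the last (opponent-precision) argument, i.e.\ by $(d,p_\Odd)$ for $\solve_\Even(G,d,p_\Even,p_\Odd)$ and by $(d,p_\Even)$ for $\solve_\Odd(G,d,p_\Odd,p_\Even)$. This is well-founded: the two self-calls on Lines~\ref{l4} and~\ref{l9} keep $d$ fixed and shrink the last argument to $\lfloor p_\Odd/2\rfloor<p_\Odd$ (we have $p_\Odd\ge 2$, as otherwise Line~\ref{l2} applies), whereas the call on Line~\ref{l7} strictly decreases $d$. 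In the base case the procedure returns $G$; this is $\Even$-closed and contains every $\Even$-dominion, and property~\ref{c3:e3} is vacuous because a non-empty dominion carries an infinite play and, self-loops being forbidden, has at least two vertices, so no $\Odd$-dominion of size at most $p_\Odd\le 1$ exists (and $G=\emptyset$ is trivial).

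For the inductive step I would first collect, from the induction hypothesis, that $G_1$ enjoys~\ref{c3:e1}--\ref{c3:e3} in $G$ with $\Odd$-precision $\lfloor p_\Odd/2\rfloor$, that $W_\Odd$ enjoys the $\Odd$-version in $H$ (legitimately, since removing $\atr_\Even(N_d,G_1)$ deletes all priority-$d$ vertices, so $H$ has priorities at most $d-1$), and that $G_3$ enjoys~\ref{c3:e1}--\ref{c3:e3} in $G_2$ with $\Odd$-precision $\lfloor p_\Odd/2\rfloor$. Idempotence of attractors gives two closure facts: $H$ is $\Odd$-closed in $G_1$ and $G_2$ is $\Even$-closed in $G_1$. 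Property~\ref{c3:e1} is then immediate, since $G_3$ is $\Even$-closed in $G_2$, $G_2$ in $G_1$, $G_1$ in $G$, and $\Even$-closedness composes along nested subgames. For property~\ref{c3:e2}, let $D$ be an $\Even$-dominion in $G$ with $|D|\le p_\Even$. The hypothesis for $G_1$ gives $D\subseteq G_1$, and as every play of its strategy stays in $D\subseteq G_1$, $D$ is still an $\Even$-dominion in $G_1$. Since $H$ is $\Odd$-closed in $G_1$, Lemma~\ref{lem:domstrat} makes $D\cap H$ an $\Even$-dominion in $H$ of size at most $p_\Even$, hence disjoint from $W_\Odd$ by the hypothesis for $W_\Odd$; as $W_\Odd\subseteq H$ we get $D\cap W_\Odd=\emptyset$, so Lemma~\ref{lem:att} shows $D$ is an $\Even$-dominion in $G_2$ and $D\subseteq G_2$, and the hypothesis for $G_3$ finally gives $D\subseteq G_3$.

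Property~\ref{c3:e3} carries the real content. Given an $\Odd$-dominion $D$ in $G$ with $|D|\le p_\Odd$, two applications of Lemma~\ref{lem:domstrat} (using that $G_1$ is $\Even$-closed in $G$ and $G_2$ is $\Even$-closed in $G_1$) show that $D\cap G_2$ is an $\Odd$-dominion in $G_2$. If I can establish the size bound $|D\cap G_2|\le\lfloor p_\Odd/2\rfloor$, then the hypothesis for $G_3$, which avoids every $\Odd$-dominion of $G_2$ of size up to $\lfloor p_\Odd/2\rfloor$, yields $D\cap G_2\cap G_3=\emptyset$; and since $G_3\subseteq G_2$ this is precisely $D\cap G_3=\emptyset$. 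Thus the entire argument reduces to the claim that the two regions deleted by the first two calls, $G\setminus G_1$ and $G_1\setminus G_2$, together cover more than half of $D$.

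This size bound is the step I expect to be the main obstacle, and the only one that genuinely uses the halving. The route I would try is to peel off, using Lemma~\ref{lem:no-h}, a non-empty $\Odd$-sub-dominion of $D$ avoiding the top priority $d$: this core exists because the even $d$ is not of $\Odd$'s parity, it lives inside $H$ by Lemma~\ref{lem:att}, and being of size at most $p_\Odd$ it is captured exactly by the full-precision middle call and so absorbed into $\atr_\Odd(W_\Odd,G_1)=G_1\setminus G_2$. The delicate part is to turn ``some of $D$ is removed'' into ``at least half of $D$ is removed'': one must control the part of $D$ surviving into $G_2$ together with the part already discarded by the half-precision first call, and the bookkeeping is genuinely subtle because a set can be an $\Odd$-dominion of the smaller game $G_2$ without being one of the larger game $G_1$ (the opponent regains escape moves), so Lemma~\ref{lem:no-h} may only be applied inside $G_2$. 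Once the ``over half'' inequality is in hand, the remaining reasoning is routine manipulation with Lemmas~\ref{lem:att}, \ref{lem:domstrat}, and~\ref{lem:no-h}.
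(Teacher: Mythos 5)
Your setup is sound: the lexicographic induction is well-founded, your base case is right, your treatment of Items~\ref{c3:e1} and~\ref{c3:e2} matches the paper's argument, and your reduction of Item~\ref{c3:e3} to the single size bound $|D\cap G_2|\le\lfloor p_\Odd/2\rfloor$ is exactly how the paper proceeds. But the proof of that bound --- which you yourself flag as the main obstacle --- is missing, and the route you sketch cannot deliver it. Applying Lemma~\ref{lem:no-h} to $D$ produces a nonempty core $C$ avoiding priority $d$, and one can indeed show (via Lemmas~\ref{lem:domstrat} and~\ref{lem:att}) that $C\cap G_1$ is swallowed by the full-precision middle call, so $C$ is disjoint from $G_2$. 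But that core can be arbitrarily small --- two vertices, say --- so this only yields $|D\cap G_2|\le p_\Odd-|C|$, nowhere near the required half; and no single application of Lemma~\ref{lem:no-h} can do better. Your sketch also never explains how the half-precision \emph{first} call enters the accounting, which is precisely what is needed.

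The missing idea in the paper's proof is to work inside the subgame induced by $D$ itself, not inside $G_1$ or $G_2$. Let $S$ be the union of \emph{all} Odd's dominions of size at most $\lfloor p_\Odd/2\rfloor$ in the subgame induced by $D$, and let $A=\atr_\Odd(S,D)$. Since $D$ is $\Odd$-closed in $G$, every Odd's dominion of the subgame $D$ is also one of $G$, so the induction hypothesis for the \emph{first} call gives $S\cap G_1=\emptyset$, and then ($G_1$ being $\Even$-closed in $G$) also $A\cap G_1=\emptyset$. If $A=D$, then $D$ avoids $G_1\supseteq G_3$ and you are done; otherwise apply Lemma~\ref{lem:no-h} to the subgame induced by $D\setminus A$ to obtain a core $C$. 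Now $C\cup A$ is an Odd's dominion in the subgame $D$ (follow the $C$-strategy until the play enters $A$, then attract to $S$ and switch to a small-dominion strategy), hence also in $G$. The point of taking $S$ maximal is this: $C\cup A$ is a dominion of the subgame $D$ that is \emph{not} contained in $S$, because $\emptyset\neq C\subseteq D\setminus A\subseteq D\setminus S$; so by the very definition of $S$ it must have size greater than $\lfloor p_\Odd/2\rfloor$. Finally, $C\cup A$ is disjoint from $G_2$: the part $A$ already avoids $G_1$, while $(C\cup A)\cap G_1\subseteq C$ is an Odd's dominion in $G_1$ of size at most $p_\Odd$ without priority-$d$ vertices, hence a dominion in $H$, hence inside $W_\Odd$ by the induction hypothesis for the middle call. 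Together, more than half of $D$ is removed before the third call, giving $|D\cap G_2|\le p_\Odd-(\lfloor p_\Odd/2\rfloor+1)\le\lfloor p_\Odd/2\rfloor$. Without this $S$/$A$ construction --- in particular, without invoking the first call's guarantee on all small dominions of the subgame $D$ simultaneously --- the ``over half'' inequality you need does not follow, so as it stands your proposal has a genuine gap at the heart of the argument.
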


\begin{proof}
	The proof is by induction on the sum $d+p_\Even+p_\Odd$.
	We present only the part concerning $\solve_\Even(G,d,p_\Even,p_\Odd)$, as the other part is similar.

	When $G=\emptyset$ or $p_\Odd\leq 1$, the whole $G$ returned in Line~\ref{l2} clearly satisfies the thesis:
	there exist no Odd's dominions of size $1$.
	We proceed with $G\neq\emptyset$ and $p_\Odd>1$.

	We first show Item~\ref{c3:e1}.
	By the induction hypotheses we know that $G_1$ is $\Even$-closed in $G$ (which, in particular, allows us to use $G_1$ as a subgame in Line~\ref{l6}) and that $G_3$ is $\Even$-closed in~$G_2$.
	Moreover, $G_2$ is $\Even$-closed in $G_1$, because
	$G_1\setminus G_2 = \atr_\Odd(W_\Odd,G_1)$ is invariant under the $\atr_\Odd$ operation.
	In consequence the returned set $G_3$ is $\Even$-closed in $G$ as needed. This is because
	every Even's vertex in $G_3$ has at least one successor in $G_3$ since $G_3$ is $\Even$-closed in $G_2$;
	at the same time, every Odd's vertex in $G_3\subseteq G_2\subseteq G_1$ has no successor in $G\setminus G_1$ (as $G_1$ is $\Even$-closed in $G$),
	nor in $G_1\setminus G_2$ (as $G_2$ is $\Even$-closed in $G_1$),
	nor in $G_2\setminus G_3$ (as $G_3$ is $\Even$-closed in $G_2$).

	Next, we show Item~\ref{c3:e2}, saying that the set returned by $\solve_\Even(G,d,p_\Even,p_\Odd)$ contains all Even's dominions of size up to $p_\Even$.
	Let $D$ be such a dominion.
	According to the induction hypothesis, $D$ is contained in $G_1$, and it is, moreover, an Even's dominion in $G_1=G\setminus\atr_\Odd(G\setminus G_1,G)$ (cf.\@ Lemma~\ref{lem:att}).
	Since $H$ is $\Odd$-closed in $G_1$, the intersection $D'$ of $D$ and $H$
	is an Even's dominion in $H$ (cf.\@ Lemma~\ref{lem:domstrat}) and therefore, from the induction hypothesis,
	neither $D'$ nor $D$ (containing additionally only vertices not in $H$) intersects with $W_\Odd$.
	In consequence, $D$ is a dominion in $G_2$ (cf.\@ Lemma~\ref{lem:att})
	and, by the induction hypothesis, it is contained in the returned set $G_3$.

	We proceed with showing Item~\ref{c3:e3}, saying that the set returned by $\solve_\Even(G,d,p_\Even,p_\Odd)$ does not intersect with Odd's dominions of size up to $p_\Odd$.
	Let $D$ be such a dominion, let $S$ be the union of Odd's dominions of size up to $\lfloor p_\Odd/2 \rfloor$ in the subgame induced by $D$,
	and let $A$ be the $\Odd$-attractor of $S$ in this subgame.

	Because $D$ is $\Odd$-closed in $G$ (i.e., Even cannot force to leave $D$), $A$ is contained in the $\Odd$-attractor of $S$ in the whole $G$.
	For the same reason, every Odd's dominion in the subgame $D$ is an Odd's dominion in the whole $G$.
	Thus, by the induction hypothesis, $S$ does not intersect with $G_1$.
	Since (again by the induction hypothesis) $G_1$ is $\Even$-closed in $G$, this implies that $\atr_\Odd(S,G)$ (hence also its subset $A$) does not intersect with $G_1$:
	because of $S\subseteq G\setminus G_1$ we have that $\atr_\Odd(S,G)\subseteq\atr_\Odd(G\setminus G_1,G)=G\setminus G_1$.
	If $A=D$, then $D$ does not intersect with $G_3\subseteq G_1$, and we are done.

	We consider the case of $A\neq D$.
	Notice that $D$ is a dominion not only in the whole $G$, but also in the subgame induced by $D$.
	In consequence, $D\setminus A$ is a dominion in the subgame induced by $D\setminus A$ (cf.\@ Lemma~\ref{lem:domstrat}; $D\setminus A$ is $\Even$-closed in $D$).
	It contains a nonempty Odd's dominion $C$ without vertices of priority $d$ (cf.\@ Lemma~\ref{lem:no-h}).
	The union $C\cup A$ is an Odd's dominion in $D$: inside $C$ Odd has a winning strategy, which is valid as long as the play does not leave $D\setminus A$ entering $A$;
	in $A\setminus S$ Odd uses his strategy to reach $S$, and in $S$ he uses his strategy from one of the small dominions covering $S$.
	Moreover, because $D$ is $\Odd$-closed in $G$, $C\cup A$ is an Odd's dominion also in the whole $G$.
	Since it is not contained in $S$, it is of size greater than $\lfloor p_\Odd/2\rfloor$.
	We now show that $C\cup A$ does not intersect with $G_2$.

	Recall that $A\cap G_1=\emptyset$.
	Since $G_1$ is $\Even$-closed in $G$, the set $(C\cup A)\cap G_1$ is an Odd's dominion in $G_1$ (cf.\@ Lemma~\ref{lem:domstrat}),
	and since $(C\cup A)\cap G_1\subseteq C$ contains no vertices of priority $d$, also in $H=G_1\setminus\atr_\Even(N_d,G_1)$ (cf.\@ Lemma~\ref{lem:att}).
	By the induction hypothesis, it is contained in $W_\Odd$, and therefore $C\cup A$ does not intersect with $G_2\subseteq G_1\setminus W_\Odd$.

	Note that the proof of Item~\ref{c3:e1} above also shows that $G_2$ is $\Even$-closed in $G$.
	Now, since $D$ is a dominion in $G$, Lemma~\ref{lem:domstrat} implies that the set $D\cap G_2$ is a dominion in $G_2$.
	Moreover, its size is at most $p_\Odd-(\lfloor p_\Odd/2\rfloor+1)\leq \lfloor p_\Odd/2 \rfloor$,
	because $D$ is of size at most $p_\Odd$ and $A\cup C \subseteq D$ (not intersecting with $G_2$) is of size greater than $\lfloor p_\Odd/2\rfloor$.
	So, by the induction hypothesis, $D\cap G_2$ does not intersect with the returned set $G_3$.
	The same holds for the whole $D$, because $G_3 \subseteq G_2$.
\end{proof}

\subsection{Analysis}

Let $R(d,\ell)$ be the maximal number of calls to $\solve_\Even$ and $\solve_\Odd$ performed during an execution of $\solve_\Even(G,d,p_\Even,p_\Odd)$ if $d$ is even
or of $\solve_\Odd(G,d,p_\Odd,p_\Even)$ if $d$ is odd, where $\ell=\lfloor \log p_\Even \rfloor + \lfloor \log p_\Odd \rfloor$ (in this paper, $\log$ denotes the binary logarithm).
We assume that $p_\Even\geq 1$ and $p_\Odd\geq 1$.

An induction on $\ell+d$ shows that $R(d,\ell)\leq 2^{\ell+1}\cdot\binom{d+\ell}{\ell}-1$.
Indeed, if $d=0$ (implying $G=\emptyset$) or $\ell=0$ (implying $p_\Even=p_\Odd=1$), then the execution finishes immediately,
so $R(d,\ell)=1\leq 2^{\ell+1}\cdot\binom{d+\ell}{\ell}-1$.
For positive $d$ and $\ell$ we have
\begin{align*}
	R(d,\ell) & \leq 1+2\cdot R(d,\ell-1) + R(d-1,\ell),
\end{align*}
where $1$ counts the main call itself, $2\cdot R(d,\ell-1)$ counts calls performed while executing Lines~\ref{l4} and~\ref{l9},
and $R(d-1,\ell)$ counts calls performed while executing Line~\ref{l7}.
Then, applying the induction hypothesis, we obtain that
\begin{align*}
	R(d,\ell)& \leq 1+2\cdot\left(2^{\ell-1+1}\cdot\binom{d+\ell-1}{\ell-1}-1\right) + 2^{\ell+1}\cdot\binom{d-1+\ell}{\ell}-1 \\
	& = 2^{\ell+1}\cdot\binom{d+\ell}{\ell}-2
	\leq 2^{\ell+1}\cdot\binom{d+\ell}{\ell}-1.
\end{align*}

We now apply the inequality $\binom{k}{\ell}\leq{\left(\frac{ek}{\ell}\right)}^\ell$ for $k=d+\ell$, obtaining
\begin{align*}
	R(d,\ell) & \leq 2^{\ell+1}\cdot{\left(\frac{e\cdot(d+\ell)}{\ell}\right)}^\ell
	=2\cdot 2^{\ell\left(1+\log e + \log \left(1+\frac{d}{\ell}\right)\right)}.
\end{align*}

Recall that while solving a parity game with $n$ vertices, we start with $p_\Even=p_\Odd=n$, that is, with $\ell=2\cdot\lfloor\log n\rfloor$.
One can check that the above function is increasing, meaning that we can replace $\ell$ by $2\cdot\log n$.
We obtain that
\begin{align*}
	R(d,2\cdot\lfloor\log n\rfloor)&\leq 2\cdot 2^{2\cdot(\log n)\cdot\left(1+\log e + \log \left(1+\frac{d}{2\cdot\log n}\right)\right)}
	=2\cdot n^{2\cdot\left(1+\log e + \log \left(1+\frac{d}{2\cdot\log n}\right)\right)}.
\end{align*}
Observe that the cost of each call, excluding the cost of further calls performed recursively, is dominated by the attractor construction in Lines~\ref{l6} and~\ref{l8},
which is linear in the number of edges in the game.
We can also estimate $\log e\leq 1.45$.

Regarding the space complexity, we notice that the recursion has depth $\Oo(\log n)$, and at each level of recursion we only need to store some sets of vertices of size $\Oo(n)$.

The complexity is summarised in the following theorem:

\begin{thm}
	Algorithm~\ref{alg:simple} computes winning regions in a given parity game.
	For a parity game with $n$ vertices, $m$ edges, and maximal priority $d$, its memory usage is in $\Oo(m+n\cdot\log n)$, and its running time is in
	\begin{align*}
		\Oo\!\left(m\cdot n^{4.9 + 2\cdot\log \left(1+\frac{d}{2\cdot\log n}\right)}\right)=n^{\Oo\left(\log \left(1+\frac{d}{\log n}\right)\right)}.
	\end{align*}
\end{thm}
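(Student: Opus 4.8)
The plan is to assemble the statement from three ingredients already prepared in the preceding subsections: the correctness lemma, the bound on the number of recursive calls, and the per-call cost and recursion-depth estimates. I would begin with \emph{correctness}. Every dominion has size at most $|G|=n$, and (as noted in the preliminaries) the winning region of a player is itself a dominion of that player. Instantiating Lemma~\ref{lem:cor-liv} at the top-level call $\solve_\Even(G,d,n,n)$ for even $d$ (the odd case using $\solve_\Odd(G,d,n,n)$ symmetrically), the returned set $R$ contains every Even's dominion of size up to $n$ --- hence all of Even's winning region $W_\Even$ --- and meets no Odd's dominion of size up to $n$ --- hence is disjoint from Odd's winning region $W_\Odd$. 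Since parity games are determined, $V=W_\Even\cup W_\Odd$ with the two regions disjoint, so from $W_\Even\subseteq R\subseteq V$ and $R\cap W_\Odd=\emptyset$ I conclude $R=W_\Even$, with complement $W_\Odd$. Thus the algorithm outputs the winning regions.

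Next I would derive the \emph{running time} by multiplying the number of recursive calls by the cost of a single call. The analysis already establishes $R(d,2\lfloor\log n\rfloor)\le 2\cdot n^{2(1+\log e+\log(1+\frac{d}{2\log n}))}$, while the local work of each call --- dominated by the two attractor constructions in Lines~\ref{l6} and~\ref{l8} --- is linear in the number of edges, i.e.\ $\Oo(m)$; the remaining per-call operations (computing $N_d$ and the set differences) are $\Oo(n)=\Oo(m)$ as well. Multiplying,
\begin{align*}
	\Oo(m)\cdot 2\cdot n^{2\left(1+\log e+\log\left(1+\frac{d}{2\log n}\right)\right)}
	&=\Oo\!\left(m\cdot n^{2+2\log e+2\log\left(1+\frac{d}{2\log n}\right)}\right),
\end{align*}
and substituting $\log e\le 1.45$ turns the constant part of the exponent $2+2\log e$ into $4.9$, giving the first displayed bound of the theorem.

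Finally I would pass to the clean asymptotic form and handle \emph{memory}. Using $m\le n^2$, the factor $m$ contributes at most $2$ to the exponent, so the exponent is $\Oo(1)+2\log(1+\frac{d}{2\log n})$; absorbing the additive constant and noting $\log(1+\frac{d}{2\log n})\le\log(1+\frac{d}{\log n})$ collapses this to $n^{\Oo(\log(1+\frac{d}{\log n}))}$, as claimed. For the space bound, the recursion has depth $\Oo(\log n)$ and each level stores only a constant number of vertex sets, each of size $\Oo(n)$, so the call stack uses $\Oo(n\log n)$ space; adding the $\Oo(m)$ representation of the game graph yields the stated $\Oo(m+n\log n)$.

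The main obstacle, such as it is, lies not in any single computation but in the passage to the headline form $n^{\Oo(\log(1+d/\log n))}$: this relies on the standard (and slightly informal) convention that an $\Oo(\cdot)$ in the exponent may absorb additive constants, since for very small $d$ and large $n$ the literal quantity $\log(1+\frac{d}{\log n})$ tends to $0$ while the true cost stays polynomial. I would make this convention explicit, as well as the reduction to $d\le n$ priorities that keeps the depth and the exponent well behaved. Everything else reduces to careful bookkeeping over results already in hand.
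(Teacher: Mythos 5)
Your proposal is correct and follows essentially the same route as the paper's own argument: correctness by instantiating Lemma~\ref{lem:cor-liv} at the top-level call with $p_\Even=p_\Odd=n$ combined with determinacy, running time as the product of the call-count bound $R(d,2\lfloor\log n\rfloor)\le 2\cdot n^{2\left(1+\log e+\log\left(1+\frac{d}{2\log n}\right)\right)}$ with the $\Oo(m)$ per-call attractor cost, and memory via the recursion-depth-times-per-level-storage observation. Your additional remarks (determinacy spelled out, $m\le n^2$, and the convention that the $\Oo(\cdot)$ in the exponent absorbs additive constants) only make explicit what the paper leaves implicit.
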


The complexity result can be reformulated as follows:
\begin{itemize}
\item	When $d=o(\log n)$, the component $\frac{d}{2\cdot\log n}$ converges to $0$, meaning that the complexity becomes $\Oo\!\left(m\cdot n^{4.9}\right)$
	(for $n$ large enough we have that $\log\left(1+\frac{d}{2\cdot\log n}\right)\leq 1.45-\log e$).
\item	This implies that our algorithm is an fpt-algorithm when $d$ is treated as a parameter.
\item	When $d=\Oo(\log n)$, the component $\frac{d}{2\cdot\log n}$ is in $\Oo(1)$, so the complexity of the algorithm is polynomial
	(with an exponent depending on the precise relation between $d$ and $\log n$).
\item	When $d=\omega(\log n)$, the component $\frac{d}{2\cdot\log n}$ dominates the component $1$,
	implying that $\log\left(1+\frac{d}{2\cdot\log n}\right)=\log\frac{d}{\log n}-1+o(1)$; in consequence, the complexity can be written as
	$\Oo\left(m\cdot n^{2.9 + 2\cdot\log \frac{d}{\log n}}\right)$ or $n^{\Oo\left(\log \frac{d}{\log n}\right)}$.
\end{itemize}

\section{A variation}

We now present Algorithm~\ref{alg:parys}, the original algorithm of Parys~\cite{Par19}, as a variation, which we refer to as the Warsaw version.

In Algorithm~\ref{alg:simple} there is one ``central'' call to $\solve_\Odd$ for a subgame with less priorities, with unchanged precision parameters.
This call is surrounded by two calls to $\solve_\Even$, where we do not decrease the number of priorities, but we divide the precision parameter $p_\Odd$ by~$2$.
While executing each of these calls, we perform one call to $\solve_\Odd$ for a subgame with less priorities, with $p_\Odd$ divided by $2$,
which is now surrounded by two calls to $\solve_\Even$ for $p_\Odd$ divided by $4$.
Thus, while looking at calls to $\solve_\Odd$ for subgames with less priorities, we see that there is one central call with unchanged $p_\Odd$;
it is surrounded by two call with $p_\Odd$ divided by $2$;
each of them is surrounded by two call with $p_\Odd$ divided by $4$, surrounded, in turn, by calls with $p_\Odd$ divided by $8$, and so on
(this ends when the result of the division becomes smaller than $2$).

In the variation, which we now present, this structure of recursive calls to $\solve_\Odd$ is modified:
we again perform one call with unchanged $p_\Odd$, but this time all the surrouning recursive calls use $p_\Odd$ divided by $2$ (not by powers of $2$).

\subsection{The algorithm}

As previously, the procedure $\solve_\Even(G,d,p_\Even,p_\Odd)$, given a subgame $G$ with priorities up to an even number $d$,
returns a set of vertices containing all of Even's small dominions (of size up to $p_\Even$) and not intersecting any of Odd's small dominions (of size up to $p_\Odd$).
The dual, $\solve_\Odd$, is defined for odd $d$ by swapping $\Even$ with $\Odd$.

\begin{algorithm}[H]
\caption{$\solve_\Even(G,d,p_\Even,p_\Odd)$}
\begin{algorithmic}[1]\label{alg:parys}
	\STATE \algorithmicif\ $G = \emptyset$ or $p_\Even\leq 1$\ \algorithmicthen\label{lp1}
	\STATE \hskip\algorithmicindent\algorithmicreturn\ $\emptyset$\label{lp5}
	\REPEAT
		\STATE $N_d := \{v\in G \mid \pi(v)=d\}$\label{lp8}
		\STATE $H := G \setminus \atr_\Even(N_d,G)$\label{lp9}
		\STATE $W_\Odd := \solve_\Odd(H,d-1,\lfloor p_\Odd/2\rfloor,p_\Even)$\label{lp10}
		\STATE $G := G \setminus \atr_\Odd(W_\Odd,G)$\label{lp11}
	\UNTIL{$W_\Odd = \emptyset$}
	\STATE $W_\Odd := \solve_\Odd(H,d-1,p_\Odd,p_\Even)$\label{lp16}
	\STATE $G := G \setminus \atr_\Odd(W_\Odd,G)$\label{lp17}
	\WHILE{$W_\Odd\neq \emptyset$}
		\STATE $N_d := \{v\in G \mid \pi(v)=d\}$\label{lp19}
		\STATE $H := G \setminus \atr_\Even(N_d,G)$
		\STATE $W_\Odd := \solve_\Odd(H,d-1,\lfloor p_\Odd/2\rfloor,p_\Even)$\label{lp21}
		\STATE $G := G \setminus \atr_\Odd(W_\Odd,G)$\label{lp22}
	\ENDWHILE\label{lp23}
	\STATE \algorithmicreturn\ $G$\label{lp24}
\end{algorithmic}
\end{algorithm}

The variation is closer to the classic recursive algorithms, using only calls to games with less priorities.
As a consequence, the number of calls is not capped (beyond the trivial cap of up to $n$ calls for games with $n$ vertices),
but all calls refer to sub-games with reduced maximal priority, while all but one asks for lesser guarantees.

\subsection{Correctness}

We prove the following lemma, which guarantees that $\solve_\Even(G,d,p_\Even,\allowbreak p_\Odd)$ and $\solve_\Odd(G,d,p_\Odd,p_\Even)$,
for a subgame $G$ of maximal priority at most $d$, returns two regions that contain all Even's dominions of size up to $p_\Even$ and all Odd's dominions of size up to $p_\Odd$, respectively.
Then $\solve_\Even(G,d,|G|,|G|)$ (if $d$ is even) or $\solve_\Odd(G,d,|G|,|G|)$ (if $d$ is odd) solves $G$ completely.

We remark that, although the correctness proofs of Algorithms~\ref{alg:simple} and~\ref{alg:parys} are quite similar,
there is also a lot of differences in details.
For readability reasons, we have choosen to leave the two proofs independent.

\begin{lem}\label{lem:cor-waw}
	The procedure {\normalfont$\solve_\Even(G,d,p_\Even,p_\Odd)$}, where $d$ is even and not smaller than the maximal priority in $G$, returns a set that
	\begin{enumerate}[label={(\roman*)}]
	\item\label{c4:e1}
		contains all Even's dominions in $G$ of size up to $p_\Even$, and
	\item\label{c4:e2}
		does not intersect with any Odd's dominion in $G$ of size up to $p_\Odd$.
	\end{enumerate}
	Similarly, {\normalfont$\solve_\Odd(G,d,p_\Odd,p_\Even)$}, where $d$ is odd and not smaller than the maximal priority in $G$, returns a set that
	\begin{enumerate}[label={(\roman*)}]
	\item	contains all Odd's dominions in $G$ of size up to $p_\Odd$, and
	\item	does not intersect with any Even's dominion in $G$ of size up to $p_\Even$.
	\end{enumerate}
\end{lem}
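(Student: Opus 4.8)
The plan is to prove both halves of Lemma~\ref{lem:cor-waw} simultaneously by induction on $d$, presenting only the $\solve_\Even$ case since the other is symmetric. Every recursive call made by $\solve_\Even(G,d,p_\Even,p_\Odd)$ is a call to $\solve_\Odd$ on a subgame of maximal priority at most $d-1$, so the induction hypothesis is available for all of them. The base cases ($G=\emptyset$ or $p_\Even\le 1$) are immediate: the returned $\emptyset$ trivially meets Item~\ref{c4:e2}, and it meets Item~\ref{c4:e1} because, self-loops being forbidden, there are no dominions of size~$1$. Throughout I would rely on one structural invariant: each assignment $G:=G\setminus\atr_\Odd(W_\Odd,G)$ makes the new $G$ $\Even$-closed in the old one (its complement is an $\Odd$-attractor), so the running value of $G$ is always $\Even$-closed in the original input. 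Consequently, by Lemma~\ref{lem:domstrat}, for any Even dominion $D$ (resp.\ Odd dominion) of the original game the intersection $D\cap G$ remains a dominion of the same player in the current $G$ at every point of the execution.

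For Item~\ref{c4:e1}, I fix an Even dominion $D$ with $|D|\le p_\Even$ and show it survives every iteration of both loops and the central call. At any iteration, $H=G\setminus\atr_\Even(N_d,G)$ is $\Odd$-closed in $G$, so $D\cap H$ is an Even dominion in $H$ (Lemma~\ref{lem:domstrat}) of size at most $p_\Even$. Since every call to $\solve_\Odd$ carries $p_\Even$ as its last parameter, the non-intersection clause of the induction hypothesis guarantees that the returned $W_\Odd$ does not meet $D\cap H$, hence not $D$ at all (as $W_\Odd\subseteq H$). Lemma~\ref{lem:att} then shows $D$ avoids $\atr_\Odd(W_\Odd,G)$ and stays an Even dominion in the shrunken $G$. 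As this holds at each iteration, $D$ is contained in the final returned $G$.

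Item~\ref{c4:e2} is the substance of the proof, and here I would isolate a single ``loop fact'' covering both the repeat-loop and the while-loop, since both iterate halved-precision calls until $W_\Odd=\emptyset$. The fact is: when such a loop terminates with game $G'$ and final auxiliary set $H'=G'\setminus\atr_\Even(N_d,G')$, the containment clause of the induction hypothesis (applied to the last call, which returned $\emptyset$) tells us $H'$ has no nonempty Odd dominion of size at most $\lfloor p_\Odd/2\rfloor$; I would then upgrade this to the statement that $D\cap G'$ has no nonempty Odd sub-dominion of size at most $\lfloor p_\Odd/2\rfloor$, so $D\cap G'$ is either empty or of size exceeding $\lfloor p_\Odd/2\rfloor$. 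The upgrade is exactly where Lemma~\ref{lem:no-h} enters: because $D\cap G'$ is $\Odd$-closed in $G'$, any small Odd sub-dominion of it is an Odd dominion in $G'$, and by Lemma~\ref{lem:no-h} it contains a nonempty priority-$d$-free Odd sub-dominion, which (being priority-$d$-free and $\Odd$-closed) lies inside $H'$ and is small, contradicting termination. Applying the loop fact to the repeat-loop, $D\cap G'$ is empty (whence we are done, as the rest of the algorithm only deletes vertices) or large. In the large case, Lemma~\ref{lem:no-h} applied to $D\cap G'$ yields a nonempty priority-$d$-free Odd dominion $C\subseteq H'$, which must itself be large since $D\cap G'$ has no small sub-dominion, so $|C|>\lfloor p_\Odd/2\rfloor$. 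The central full-precision call $\solve_\Odd(H',d-1,p_\Odd,p_\Even)$ then returns a $W_\Odd\supseteq C$ (containment clause), in particular nonempty, so the while-loop is entered; removing $\atr_\Odd(W_\Odd,G')$ deletes $C$, leaving $D\cap G''$ of size at most $p_\Odd-(\lfloor p_\Odd/2\rfloor+1)\le\lfloor p_\Odd/2\rfloor$. Finally, applying the loop fact to the while-loop forces the surviving $D\cap G_f$ to be empty or large, but it is contained in the small set $D\cap G''$, hence empty, which is the claim.

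I expect the main obstacle to be this ``upgrade'' step, i.e.\ translating the purely local guarantee of a terminated loop (``no small dominion in the last $H$'') into the global statement about $D$ (``no small sub-dominion survives in $D\cap G'$''). Two points need care: that a sub-dominion of the $\Odd$-closed region $D\cap G'$ is genuinely an Odd dominion in $G'$, and that Lemma~\ref{lem:no-h} can always be used to push such a sub-dominion into $H'$ by discarding priority-$d$ vertices---this is what lets loops that only ever inspect $H$ nonetheless account for dominions touching the priority-$d$ attractor. The remaining size bookkeeping, showing that the central call removes more than half of $D$ so that the while-loop's halved precision suffices, is routine once the loop fact is established.
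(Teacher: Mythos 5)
Your proof is correct and follows essentially the same approach as the paper's: the same induction (your induction on $d$ alone is a legitimate simplification of the paper's measure $d+p_\Even+p_\Odd$, since in this variant every recursive call decreases $d$), the same use of Lemmas~\ref{lem:att}, \ref{lem:domstrat}, and~\ref{lem:no-h} to push a priority-$d$-free sub-dominion into $H$, the same reading of a terminated loop (an empty $W_\Odd$ certifies that no small Odd dominion survives), and the same size accounting $p_\Odd-(\lfloor p_\Odd/2\rfloor+1)\le\lfloor p_\Odd/2\rfloor$ around the full-precision call, with your ``loop fact'' merely packaging the paper's contradiction argument locally. One cosmetic blemish: your blanket invariant that $D\cap G$ stays a dominion of the same player cannot be justified by Lemma~\ref{lem:domstrat} for \emph{Even} dominions (the shrunken $G$ is $\Even$-closed, not $\Odd$-closed), but this is harmless, since your argument for Item~\ref{c4:e1} correctly re-establishes survival of small Even dominions iteration by iteration via the induction hypothesis and Lemma~\ref{lem:att}.
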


\begin{proof}
	The proof is by induction on the sum $d+p_\Even+p_\Odd$.
	We present only the part concerning $\solve_\Even(G,d,p_\Even,p_\Odd)$, as the other part is similar.

	When $G=\emptyset$ or $p_\Even\leq 1$, the empty set returned in Line~\ref{lp5} clearly satisfies the thesis: there exist no Even's dominions of size $1$.
	We proceed with $G\neq\emptyset$ and $p_\Even>1$.

	Before starting, we remark that the loops terminate, as $G$ shrinks in size in all iterations but the last one of each loop.

	We first show Item~\ref{c4:e1}, saying that the set returned by $\solve_\Even(G,d,p_\Even,p_\Odd)$ contains all Even's dominions of size up to $p_\Even$.
	Let $D$ be such a dominion.

	For each repetition of the body of the \textbf{repeat} (Lines~\ref{lp8}--\ref{lp11}) and \textbf{while} (Lines~\ref{lp19}--\ref{lp22}) loops, as well as for Lines~\ref{lp16}--\ref{lp17},
	we have the following inductive argument.
	Initially, $D$ is an Even's dominion in $G$.
	Since $H$ is $\Odd$-closed in $G$, the intersection $D'$ of $D$ and $H$ is an Even's dominion in $H$ (cf.\@ Lemma~\ref{lem:domstrat}) and therefore, from the induction hypothesis,
	neither $D'$ nor $D$ (containing additionally only vertices not in $H$) intersects with the set $W_\Odd$ (computed in Line~\ref{lp10}/\ref{lp16}/\ref{lp21}).
	In consequence, $D$ is a dominion in the newly computed set $G$ in Line~\ref{lp11}/\ref{lp17}/\ref{lp22} (cf.\@ Lemma~\ref{lem:att}).

	From this argument it follows that $D$ is contained in the set $G$ when it is returned in Line~\ref{lp24}.

	We proceed with showing Item~\ref{c4:e2}, saying that the set returned by $\solve_\Even(G,d,p_\Even,p_\Odd)$ does not intersect with Odd's dominions of size up to $p_\Odd$.
	Let $D$ be such a dominion.

	For each repetition of the body of the \textbf{repeat} (Lines~\ref{lp8}--\ref{lp11}) and \textbf{while} (Lines~\ref{lp19}--\ref{lp22}) loops, as well as for Lines~\ref{lp16}--\ref{lp17},
	we have the following inductive argument.
	Initially, $D \cap G$ is an Odd's dominion in $G$.
	Due to Lemma~\ref{lem:domstrat}, $D\cap G$ is then again an odd dominion in $G$ after execution of Line~\ref{lp11}/\ref{lp17}/\ref{lp22}
	(independent of how the respective set $W_\Odd$ is calculated).

	We now assume for contradiction that the intersection of $D$ and $G$ is non-empty in Line~\ref{lp24}, and
	therefore that this intersection is non-empty throughout the whole execution of the procedure.

	To establish the contradiction, we first observe that, in the last iteration of the \textbf{repeat} loop, $W_\Odd$ is empty.
	By Lemma~\ref{lem:no-h}, the intersection of $D$ and $G$, being non-empty, contains a nonempty Odd's dominion $C$ in $G$ without vertices of priority $d$
	and, by Lemma~\ref{lem:att}, $C$ is an Odd's dominion in the subgame $H=G\setminus\atr_\Even(N_d,G)$ computed in Line~\ref{lp9}.
	As $W_\Odd$ is empty, $C$ (as well as its superset $H$) must be of size greater than $\lfloor p_\Odd/2\rfloor$,
	as $C$ would otherwise be contained in the empty set $W_\Odd$ by the induction hypothesis.

	As the set $H$ from Line~\ref{lp9} is re-used in Line~\ref{lp16},
	it therefore contains the same Odd's dominion $C$.
	By the induction hypothesis, $C$ is also contained in the region $W_\Odd$ computed in Line~\ref{lp16}.
	The intersection of $D$ and the remaining game $G$ produced in Line~\ref{lp17} is of size at most $p_\Odd-(\lfloor p_\Odd/2\rfloor+1)\leq \lfloor p_\Odd/2 \rfloor$,
	because all vertices in $C$ are removed from $G$ at that point.

	Then, the same argument as for the \textbf{repeat} loop ensures that when the \textbf{while} loop finishes,
	the intersection of $D$ and $G$ contains an Odd's dominion $C'$ of size greater than $\lfloor p_\Odd/2\rfloor$.
	This leads to a contradiction with the upper bound on the size of $D\cap G$ established in the previous paragraph.
\end{proof}

\subsection{Analysis}

As in the previous section, let $R(d,\ell)$ be the maximal number of calls to $\solve_\Even$ and $\solve_\Odd$ performed during an execution of $\solve_\Even(G,d,p_\Even,p_\Odd)$ if $d$ is even
or of $\solve_\Odd(G,d,p_\Odd,p_\Even)$ if $d$ is odd, where $\ell=\lfloor \log p_\Even \rfloor + \lfloor \log p_\Odd \rfloor$.
We assume that $p_\Even\geq 1$ and $p_\Odd\geq 1$.

An induction on $\ell+d$ shows that $R(d,\ell)\leq 2\cdot n^\ell\cdot\binom{d+\ell}{\ell}-1$,
where $n\geq 1$ is the number of vertices in the considered game $\Gg$.
Indeed, if $d=0$ (implying $G=\emptyset$) or $\ell=0$ (implying $p_\Even=p_\Odd=1$), then the execution finishes immediately,
so $R(d,\ell)=1\leq 2\cdot n^\ell\cdot\binom{d+\ell}{\ell}-1$.
For positive $d$ and $\ell$ we have
\begin{align*}
	R(d,\ell) & \leq 1+n\cdot R(d-1,\ell-1) + R(d-1,\ell),
\end{align*}
where $1$ counts the main call itself, $n\cdot R(d-1,\ell-1)$ counts calls performed while executing Lines~\ref{lp10} and~\ref{lp21},
and $R(d-1,\ell)$ counts calls performed while executing Line~\ref{lp16}.
The calls with reduced parameters (Lines~\ref{lp10} and~\ref{lp21}) are always done with games of decreasing size, and there is no game of size $1$;
thus, there is at most $n$ of these calls (for games of size $n,n-1,\dots,2$ and then $0$).
Then, applying the induction hypothesis to the above formula, we obtain that
\begin{align*}
	R(d,\ell)& \leq 1+n\cdot\left(2\cdot n^{\ell-1}\cdot\binom{d-1+\ell-1}{\ell-1}-1\right) + 2\cdot n^\ell\cdot\binom{d-1+\ell}{\ell}-1 \\
	&\leq 2\cdot n^\ell\cdot\left(\binom{d-1+\ell}{\ell-1}+\binom{d-1+\ell}{\ell}\right)-1=2\cdot n^\ell\cdot\binom{d+\ell}{\ell}-1.
\end{align*}

We now apply the inequality $\binom{k}{\ell}\leq{\left(\frac{ek}{\ell}\right)}^\ell$ for $k=d+\ell$, obtaining
\begin{align*}
	R(d,\ell) & \leq 2\cdot n^\ell\cdot{\left(\frac{e\cdot(d+\ell)}{\ell}\right)}^\ell
	=2\cdot 2^{\ell\left(\log n+\log e + \log \left(1+\frac{d}{\ell}\right)\right)}.
\end{align*}

Recall that while solving a parity game with $n$ vertices, we start with $p_\Even=p_\Odd=n$, that is, with $\ell=2\cdot\lfloor\log n\rfloor$.
One can check that the above function is increasing, meaning that we can replace $\ell$ by $2\cdot\log n$.
We obtain that
\begin{align*}
	R(d,2\cdot\lfloor\log n\rfloor)&\leq 2\cdot 2^{2\cdot(\log n)\cdot\left(\log n+\log e + \log \left(1+\frac{d}{2\cdot\log n}\right)\right)}
	=2\cdot n^{2\cdot\left(\log n+\log e + \log \left(1+\frac{d}{2\cdot\log n}\right)\right)}.
\end{align*}
This time, to each call to the procedure we associate the cost of the two attractor constructions, preceding and following this call.
In consequence, the cost allocated to each call is again linear in the number of edges.
Estimating $\log e\leq 1.45$, we obtain the following theorem:

\begin{thm}
	Algorithm~\ref{alg:parys} computes winning regions in a given parity game.
	For a parity game with $n$ vertices, $m$ edges, and maximal priority $d$, its memory usage is in $\Oo(m+n\cdot\log n)$, and its running time is in
	\begin{align*}
		\Oo\!\left(m\cdot n^{2.9 + 2\cdot\log n + 2\cdot\log \left(1+\frac{d}{2\cdot\log n}\right)}\right)=n^{\Oo\left(\log n + \log \left(1+\frac{d}{\log n}\right)\right)}
		=n^{\Oo(\log n)}.
	\end{align*}
\end{thm}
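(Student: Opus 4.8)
The plan is to assemble the theorem from the two pieces already in place: the parameterised correctness statement of Lemma~\ref{lem:cor-waw} and the bound on the number of recursive calls $R(d,\ell)$ derived in the analysis above. The statement bundles three claims --- correctness, running time, and memory --- and I would treat them in turn. For \emph{correctness}, the idea is that running the algorithm at full precision makes the precision parameters vacuous. To solve $\Gg$ with $n$ vertices and even maximal priority $d$ I would invoke $\solve_\Even(\Gg,d,n,n)$ (and symmetrically $\solve_\Odd(\Gg,d,n,n)$ when $d$ is odd), since every dominion has size at most $n$. Recalling from the preliminaries that the winning region of each player is itself a dominion of that player, item~\ref{c4:e1} of Lemma~\ref{lem:cor-waw} gives that the returned set $R$ contains Even's winning region $W_\Even$, while item~\ref{c4:e2} gives that $R$ is disjoint from Odd's winning region $W_\Odd$. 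By determinacy $V=W_\Even\sqcup W_\Odd$, so $W_\Even\subseteq R\subseteq V\setminus W_\Odd=W_\Even$, forcing $R=W_\Even$ and $V\setminus R=W_\Odd$; hence the algorithm returns exactly the two winning regions.

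For the \emph{running time} I would simply multiply the call count by the per-call cost. Starting from $p_\Even=p_\Odd=n$ we have $\ell=2\lfloor\log n\rfloor$, and the analysis already established
\[
	R(d,2\lfloor\log n\rfloor)\leq 2\cdot n^{2\cdot\left(\log n+\log e+\log\left(1+\frac{d}{2\cdot\log n}\right)\right)}.
\]
Apart from the recursive calls it triggers, each invocation performs only the two attractor constructions charged to it, each linear in the number $m$ of edges, so the total running time is $\Oo(m)$ times $R(d,2\lfloor\log n\rfloor)$. Substituting $\log e\leq 1.45$ (so $2\log e\leq 2.9$) into the exponent yields the stated bound $\Oo\!\left(m\cdot n^{2.9+2\log n+2\log(1+d/(2\log n))}\right)$. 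To reach the coarse form $n^{\Oo(\log n)}$ I would note that $m\leq n^2$ and, since there are at most $n$ distinct priorities, $d\leq n$, whence $\log(1+d/\log n)=\Oo(\log n)$ and the whole exponent is $\Oo(\log n)$.

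The \emph{memory} bound is the only step requiring care, and it is where I expect the main (though mild) obstacle to lie. Since every recursive call of Algorithm~\ref{alg:parys} lowers the maximal priority by one, the call stack has depth at most $d$, and a naive accounting that stores a fresh copy of each subgame and of each set $N_d,H,W_\Odd$ per frame would give only $\Oo(m+n\cdot d)$. To obtain $\Oo(m+n\cdot\log n)$ I would store the edge relation once, in $\Oo(m)$ space, and represent the nested chain of subgames $\Gg\supseteq G^{(1)}\supseteq G^{(2)}\supseteq\cdots$ implicitly by a single $\Oo(n)$-sized labelling of vertices by the depth at which they are removed, rather than by separate copies; the transient working sets are reused and cost $\Oo(n)$ at the currently executing level only, exactly as in the corresponding argument for Algorithm~\ref{alg:simple}. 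Checking that this bookkeeping stays consistent across the suspended frames, and that in the bit-complexity reading the labels fit in $\Oo(n\log n)$ space, is the one place where the write-up must be precise; everything else follows mechanically from Lemma~\ref{lem:cor-waw} and the recurrence for $R(d,\ell)$.
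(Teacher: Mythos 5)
Your argument is correct, and for two of the three claims it is essentially the paper's own: correctness is obtained, exactly as in the paper, by instantiating Lemma~\ref{lem:cor-waw} at full precision $p_\Even=p_\Odd=n$ and combining it with determinacy and the fact that winning regions are dominions; the running time is obtained, exactly as in the paper, by multiplying the already-derived bound on $R(d,2\lfloor\log n\rfloor)$ by an $\Oo(m)$ cost per call, with $2\log e\leq 2.9$ and $d\leq n$ (after renumbering of priorities) giving the displayed forms. One caution on phrasing: a single invocation of Algorithm~\ref{alg:parys} may run many loop iterations and hence many attractor computations, so ``each invocation performs only the two attractor constructions charged to it'' must be read as the paper's charging scheme---each loop iteration's two attractors are charged to the one recursive call it contains---not as a statement about the body of one invocation. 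The genuine difference lies in the memory bound, where you are more careful than the paper. The paper gives no space argument at all for Algorithm~\ref{alg:parys}; its only space argument concerns Algorithm~\ref{alg:simple} and rests on a claimed recursion depth of $\Oo(\log n)$, which does not transfer here: as you correctly observe, the depth of Algorithm~\ref{alg:parys} is bounded only by $d$ (every call decrements $d$, and the full-precision calls of Line~\ref{lp16} reduce nothing else), so naive per-frame storage yields only $\Oo(m+n\cdot d)$. Your laminar-representation trick---storing the graph once and encoding the chain of nested subgames on the stack by a single removal-depth labelling of the vertices, with $\Oo(1)$ scalars per frame and one reusable $\Oo(n)$ working area---is sound, because each callee's game is contained in its caller's current game and suspended frames need only their current game, and it does establish $\Oo(m+n\log n)$. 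Your only slip is the attribution: this is not ``exactly as in the corresponding argument for Algorithm~\ref{alg:simple}''; the paper's argument there is a depth-times-width count, whereas your representation is precisely the argument needed to substantiate the bound that the paper, for this algorithm, merely asserts.
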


The last equality above holds under the assumption $d\leq n$.
This is a reasonable assumption: clearly there are at most $d$ different priorities, and they can be renumbered so that only consecutive numbers, from $1$ to some upper bound, are used.

The main difference between the complexity here and in Section~\ref{sec:liverpool} is that here we have $\log n$ in the exponent.
This means that the algorithm does not speed up in the common case where the number of priorities is significantly smaller than the number of vertices.

\section{Optimisations}\label{sec:optimisations}

In this section we present some optimisations that can be applied to Algorithms~\ref{alg:simple} and~\ref{alg:parys}.
They do not improve the theoretical complexity, but they may (and actually they do) speed up the algorithm in practice, on some inputs.

\begin{enumerate}
\item	After the \textbf{repeat} loop in Algorithm~\ref{alg:parys} we can check whether $|H|\leq\lfloor p_\Odd/2\rfloor$,
	and if this is the case, we can return $G$ immediately,	without executing Lines~\ref{lp16}-\ref{lp23}.
	Indeed, the induction hypothesis guarantees that $W_\Odd$, calculated as $\solve_\Odd(H,d-1,\lfloor p_\Odd/2\rfloor,p_\Even)$,
	contains all Odd's dominions in $H$ of size up to $\lfloor p_\Odd/2\rfloor$.
	When $|H|\leq\lfloor p_\Odd/2\rfloor$, then these are actually all Odd's dominions in $H$, and it makes no sense to call $\solve_\Odd$ again with precision $p_\Odd$.

\item	Likewise, in Algorithm~\ref{alg:simple}, if $|G|\leq\lfloor p_\Odd/2\rfloor$, then the first call to $\solve_\Even$ handles already all Odd's dominions in $G$;
	we can return $G_1$ immediately after Line~\ref{l4}.

\item	To strengthen the effect of testing whether $|G|\leq \lfloor p_\Odd/2\rfloor$ or $|H|\leq\lfloor p_\Odd/2\rfloor$ in Algorithms~\ref{alg:simple} and~\ref{alg:parys}, respectively,
	the initial call can be made with $p_\Even = p_\Odd = 2^{\lfloor \log |G|\rfloor + 1} - 1$ instead of $p_\Even = p_\Odd = |G|$.
	The former number is slightly greater, but the depth of the recursion remains unchanged.

\item	In Algorithm~\ref{alg:parys}, if during the recursive evaluation of the last call to $\solve_\Odd$ in the \textbf{repeat} loop,
	the condition ``$p_\Odd\leq 1$'' (Line~\ref{lp1} of $\solve_\Even$) was never true, then, again, we can return $G$ immediately after this loop, without executing Lines~\ref{lp16}-\ref{lp23}.
	Indeed, in such a case, the call $\solve_\Odd(H,d-1,\lfloor p_\Odd/2\rfloor,p_\Even)$ is equivalent to a call with any greater value (e.g., $|H|$) in place of $\lfloor p_\Odd/2\rfloor$,
	so the the returned set contains all Odd's dominions in $H$, and there is no need to call $\solve_\Odd$ with precision $p_\Odd$.

	The check can be implemented by returning and using a ``dirty'' flag that marks whether the ``$p_\Odd \leq 1$'' condition has been true in any of the sub-routines.

\item\label{opti_5}
	As in implementations of standard Zielonka's algorithm, the conditions in the \textbf{repeat} and \textbf{while} loops in Algorithm~\ref{alg:parys} can be changed
	from ``$W_\Odd = \emptyset$'' to $W_\Odd$ being invariant under $\Odd$-attractor
	(i.e., to ``$\atr_\Odd(W_\Odd,G)=W_\Odd$'' for the value of $G$ before it is updated in Lines~\ref{lp11},~\ref{lp17}, or~\ref{lp22}).
	This requires to re-calculate $H$ after the \textbf{repeat} loop, before Line~\ref{lp16}.
	While this is clearly compatible with the optimisations above, it complicates the correctness argument,
	namely the part where we follow what the algorithm does to a small (size up to $p_\Odd$) Odd's dominion $D$.

	For the basic version of the algorithm, at every step we were considering a nonempty Odd's dominion $C$ in the subgame $D\cap G$, not containing vertices of priority $d$.
	Now, instead of choosing such a dominion $C$ arbitrarily, we should take the greatest one (it exists, because the union of two dominions is a dominion).
	This maximality implies that if $C$ is strictly smaller than $D\cap G$, then $\atr_\Odd(C,D\cap G)$ contains a vertex of the maximal priority $d$.
	Indeed, note that $\atr_\Odd(C,D\cap G)$ is itself an Odd's dominion in $D\cap G$;
	thus it either equals $C$ or contains a vertex of priority $d$.
	In the latter case we are done.
	In the former case, there is a nonempty dominion $C'$ in $(D\cap G)\setminus C$, not containing vertices of priority $d$;
	the union $C'\cup C$ is an Odd's dominion in $D\cap G$, contradicting the maximality of $C$.

	Having this, we observe as previously that if $|C|\leq\lfloor p_\Odd/2\rfloor$,
	then $C$ is included in $W_\Odd$ and thus removed from $G$.
	Moreover, if $C$ does not cover the whole $D\cap G$, then $\atr_\Odd(C,D\cap G)\subseteq\atr_\Odd(W_\Odd,G)$ contains a vertex of priority $d$;
	this vertex is not included in $G\setminus N_d\supseteq H\supseteq W_\Odd$, so the current loop continues.
	If $|C|>\lfloor p_\Odd/2\rfloor$, then the \textbf{repeat} loop may finish without removing the whole $C$ from $G$,
	and the remaining part of $C$ is removed from $G$ by the full precision call in Lines~\ref{lp16}--\ref{lp17} (since clearly $|C|\leq p_\Odd$).
	After that, in the \textbf{while} loop, we always have $|C|\leq|D\cap G|\leq\lfloor p_\Odd/2\rfloor$, which, by the above, guarantees removal of all remaining vertices of $D$.

\item	With a similar argument, the third call of Algorithm~\ref{alg:simple} (Line~\ref{l9}) can be skipped if the $\Odd$-attractor of $W_\Odd$ in $G_1$ equals $W_\Odd$.
\end{enumerate}

\section{Evaluation}

Our evaluation focuses on the performance of the presented algorithms on random parity games on the one hand and on so-called worst-case families on the other hand.
In a nutshell, the results are without great surprises. On random games, which usually don't present much of a challenge for Zielonka's algorithm, its quasipolynomial-time variations are in general orders of magnitude slower. The exception is games that can be solved in very few iterations by Zielonka, such as random high-degree games, which seem to be solvable by the Warsaw quasipolynomial-time algorithm in just as few iteration. Furthermore, the Liverpool quasipolynomial-time algorithm, which has the best worst-case complexity, reliably requires more iterations than the Warsaw version, which has a higher worst-case complexity.
On the families designed to trigger worst-case complexity in Zielonka's algorithm~\cite{Fri11,benerecetti2019robust,Gaz2016}, the story is different. By game-size 54, both quasipolynomial-time algorithms require fewer iterations than the standard recursive algorithm, and the difference in performance from then on grows at an exponential rate. On these families of worst-case games, as the theory predicts, the algorithm with better worst-case behaviour indeed requires fewer iterations.

\subsection{Implementation}

We have implemented the two algorithms presented in the paper, using all the optimisations from Section~\ref{sec:optimisations}.
For comparison, we have also implemented standard Zielonka's algorithm, where we apply Optimisation~\ref{opti_5} from Section~\ref{sec:optimisations}
(i.e., we use ``$\atr_\Odd(W_\Odd,G)=W_\Odd$'' instead of ``$W_\Odd = \emptyset$'' for the condition of a loop).
Our implementation involves the Oink framework~\cite{oink} to read files with games and to confirm correctness of our solutions,
but apart from that the implementation is independent.

To compare performance of the three algorithms, we use the number of iterations.
Because the most costly operation in the procedure is the computation of an attractor,
as a single iteration we count the fragment of code where an attractor for one player is computed, a subprocedure is called, and an attractor for the other player is computed
(the first computation of an attractor is not present in Lines~\ref{lp16}--\ref{lp17} of Algorithm 2, but anyway we count these two lines as an iteration).
Note, however, that iterations are exactly the places where a subprocedure is called,
so the number of iterations is actually equal to the number of recursive calls, minus one.

With the number of iterations as a convenient running time measure,
our evaluation becomes independent from low-level optimisations or from a choice of hardware.

The implementations, together with detailed results of the evaluation, are available at \href{https://github.com/pparys/qpt-parity}{https://github.com/pparys/qpt-parity}.

\begin{samepage}
\subsection{Benchmarks}

\subsubsection*{Constructed parity games}
Friedmann~\cite{Fri2011}\end{samepage} was the first to show an exponential lower bound to the running time of Zielonka's recursive algorithm. Since then, several families designed to trigger worst-case behaviour in recursive algorithms have been proposed: some are more robust~\cite{benerecetti2019robust}, in the sense that they also provide lower bounds for many variations and optimisations of the standard algorithm; others provide stronger lower bounds~\cite{Gaz2016}. Since here we do not study optimisation strategies, we choose to present the performance of our algorithm on Gazda's family of games, which provide the strongest lower bound we are aware of to date, namely $\Omega(2^{\frac{n}{3}})$~\cite{Gaz2016}. From our observations, this is representative of the algorithms' performance on any of the mentioned families. We have run the three algorithms on games of size up to 231, above which point all the three algorithms time out at 15 minutes.

\subsubsection*{Random parity games}
We use the PG-solver tools~\cite{PG-solver} to generate random parity games. We study the algorithms both on games with high vertex degree and with low vertex degree.
Overall, we have run our algorithms on 320 random games of sizes ranging from 100 to 2000; the number of priorities always corresponds to the number of vertices. For random games, the game size is a poor predictor of how many iterations are needed---for instance, in our sample, Zielonka's algorithm solves more random games with high degree of size 2000 than of size 200 in two iterations---we therefore don't report the sizes of the sample games in our graphs.

\subsection{Results}

We compare the implementations of the three algorithms, with all the optimisations from Section~\ref{sec:optimisations} included.
We use the number of iterations required by each algorithm as a proxy for their performance.

\subsubsection*{Constructed parity games}
\begin{figure}
\includegraphics[width=0.95\textwidth]{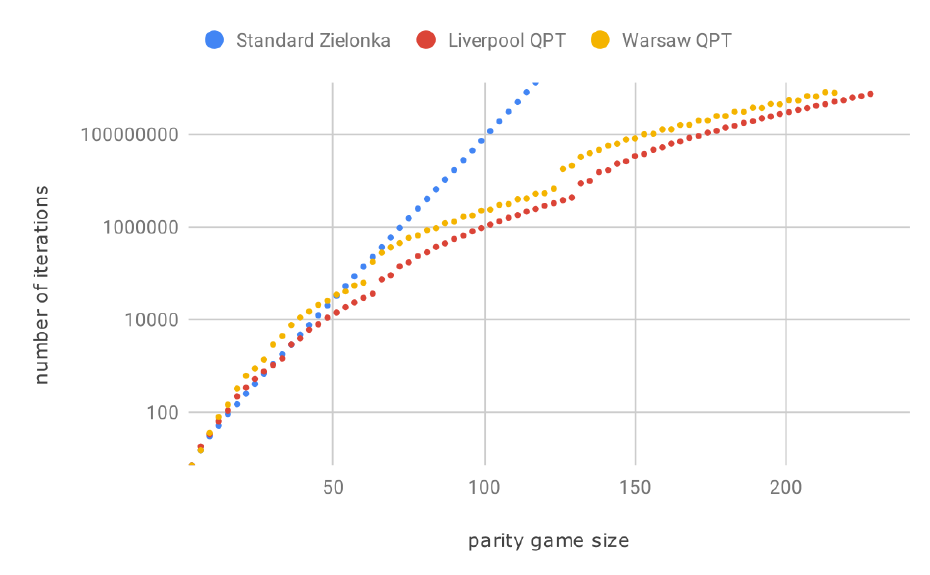}
\caption{Number of iterations needed by each algorithm on Gazda's family of games}\label{fig-worst}
\end{figure}
As shown in Figure~\ref{fig-worst}, on parity games of Gazda's family, the performance of both quasipolynomial-time versions overtake the standard version of Zielonka's algorithm around size 50.
From there on, the performance gap grows exponentially.
The Liverpool version of the algorithm, which enjoys better worst-case complexity, outperforms the Warsaw version consistently.
Note that the logarithmic scale hides the growth of this performance gap.
The number of iterations required by both quasipolynomial-time algorithms seems to jump abruptly at parity games of size 30, 63 and 126
(these sizes are around the powers of $2$, where the search depth of the algorithms change); this behaviour is persistent throughout the different worst-case families.

\subsubsection*{Random parity games}
\begin{figure}
\includegraphics[width=0.95\textwidth]{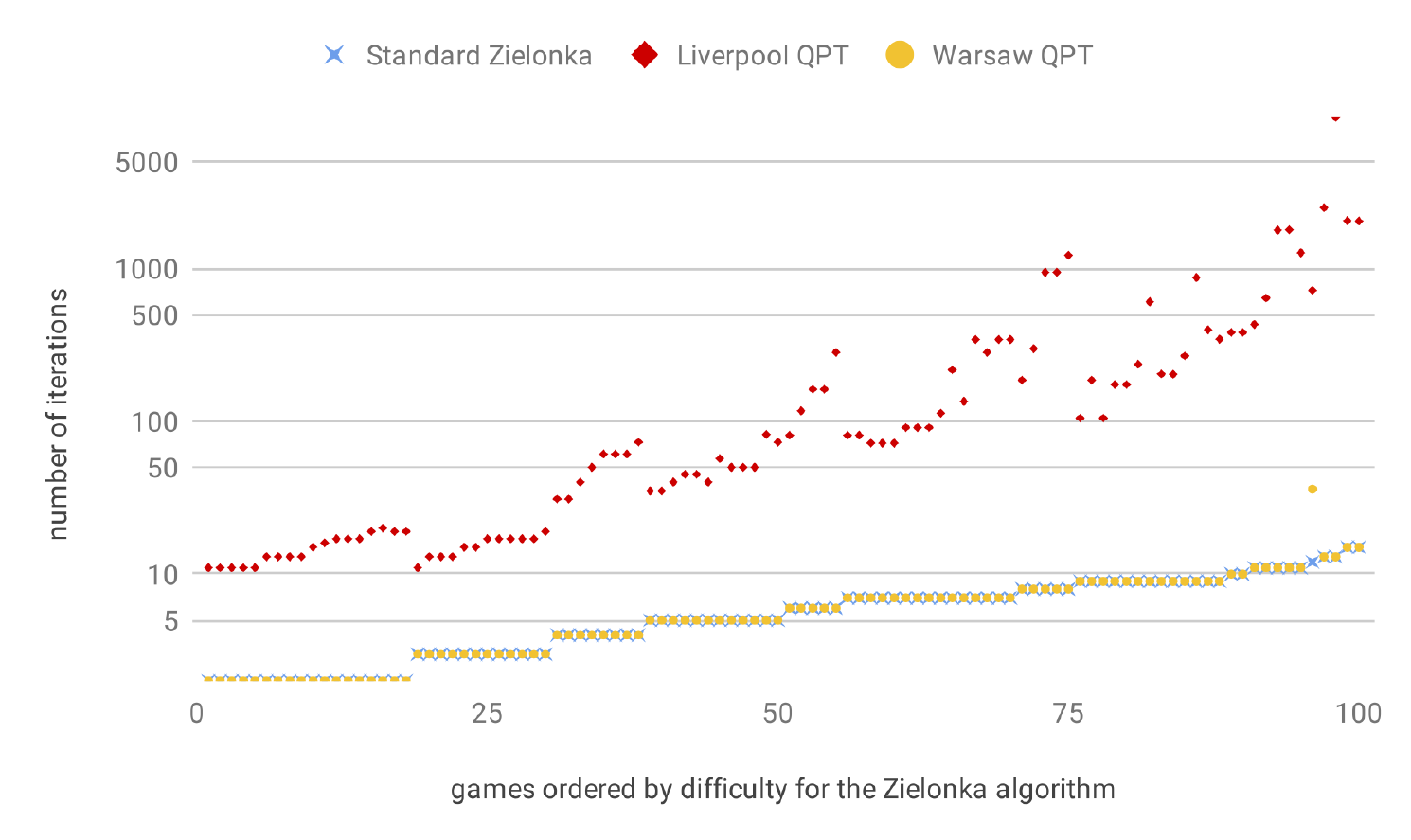}
\caption{Number of iterations needed by each algorithm on randomly generated games of high degree with between 100 and 2000 vertices}\label{fig:high-degree}
\end{figure}
\begin{figure}
\includegraphics[width=0.95\textwidth]{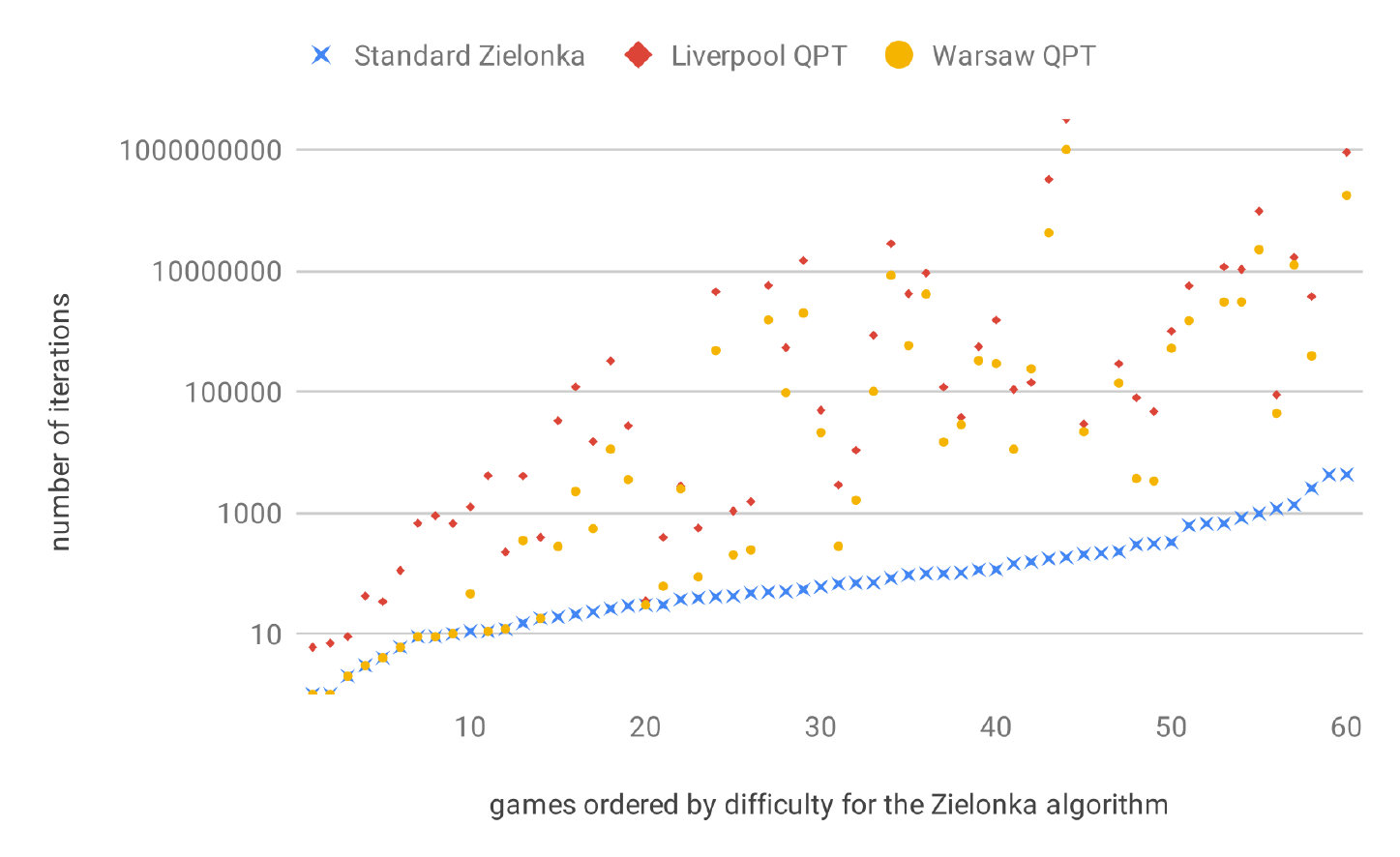}
\caption{Number of iterations needed by each algorithm on randomly generated games of low degree with between 100 and 500 vertices}\label{fig:low-degree}
\end{figure}
The size of randomly generated games does not generally predict the performance of the algorithms.
On games of high degree, the algorithms agree on the relative difficulty of games; see Figure~\ref{fig:high-degree}.
Both Zielonka's algorithm and the Warsaw quasipolynomial-time version consistently solve these games in under 15 iterations,
while the Liverpool version needs one to two orders of magnitude more iterations.
The reason why the Warsaw version behaves like Zielonka's algorithm in this case is that differences only start if the recursion depth of the algorithms exceeds the logarithm of the game size;
here the Warsaw version never runs out of bound, its call structure is the same as Zielonka's.
On random games of low degree, see Figure~\ref{fig:low-degree}, the two quasipolynomial-time versions only broadly tend to agree with Zielonka's algorithm on the relative difficulty of games. Again, the Warsaw quasipolynomial-time version seems to match Zielonka's algorithm on the easiest games, namely those that Zielonka's algorithm solves in under 15 iterations. However, in general both the Warsaw and Liverpool version require several orders of magnitude more iterations and the Warsaw version consistently outperforms the Liverpool version throughout.

\section{Conclusion}

We have presented two quasipolynomial-time algorithms solving parity games, working recursively.
The main advantage of our recursive approach over previous quasipolynomial-time algorithms solving parity games lies in its simplicity:
we perform a small modification of the straightforward Zielonka's algorithm.

Our original hope was that, in practice, the quasipolynomial-time algorithms can be as fast as standard Zielonka's algorithm on typical inputs,
while being faster (quasipolynomial instead of exponential) on worst-case inputs.
This turned out to be only partially true: although our algorithms are not dramatically slower, a significant slowdown is visible
(except for the Warsaw variant, in case of a really small number of iterations).

Czerwiński et al.~\cite{sep-lower-bound} argue that previous quasipolynomial-time algorithms solving parity games~\cite{CJKLS17,JL17,Leh18,FJKSSW19} are instances of a so-called separator approach.
Then, they prove that every algorithm accomplishing this approach has to follow a structure of a universal tree (a tree into which every tree of a given size can be embedded),
and they show a quasipolynomial lower bound for the size of such a tree---hence also for the running time of the algorithm.
Our algorithms, on the one hand, are of a different style; they cannot be seen as instances of the separator approach, hence the lower bound does not apply to them.
On the other hand, however, one can see that the trees of recursive calls in our two algorithms follow two particular constructions of universal trees.
Furthermore, Jurdziński and Morvan~\cite{JM20} generalise our algorithms to a generic algorithm parameterised by a universal tree
(any universal tree gives a correct algorithm, while our two variants are obtained by using particular universal trees).
Arnold, Niwiński, and Parys~\cite{mu-calculus} show a similar generalisation, from a slightly different perspective of fixed-point evaluation;
they also prove that universal trees are persistently present in their approach, resulting in a quasipolynomial lower bound for the number of recursive calls.

Finally, let us mention a recent result of Jurdziński, Morvan, Ohlmann, and Thejaswini~\cite{symmetric}:
they design an algorithm solving parity games that is symmetric (like our recursive algorithms),
but simultaneously works in time proportional to the size of one universal tree, not to the size of a product of two such trees
(thus the logarithm in the exponent is not multiplicated by $2$, unlike for our Liverpool variant).

\bibliographystyle{alphaurl}
\bibliography{refs}%
\label{sec:biblio}

\end{document}